\newcommand{\R}{{\mathbb R}}
\newcommand{\x}{{\mathbf{r}}}
\newcommand{\rmi}{{\mathrm{i}}}
\newcommand{\rmd}{\,\mathrm{d}}
\newcommand{\jtot}{\mathbf{j}}
\newcommand{\jpara}{\mathbf{j}^{\mathrm{p}}}
\newcommand{\jarb}{\mathbf{k}}
\newcommand\aD {\mathbf{a}}
\newcommand\jparaDm {{\jpara_{\mathrm{m}}}}
\newcommand\FGH {G}
\newcommand\FVR {F_{\mathrm{VR}}}
\newcommand\ED {\mathcal{E}_{\mathrm{D}}}
\newcommand\FD {F_{\mathrm{D}}}
\newcommand\GD {G_{\mathrm{D}}}
\newcommand\Gstat {G_{\mathrm{stat}}}
\newcommand{\A}{\mathbf{A}}
\newcommand\trace[1] {\mathrm{tr}({#1})}
\newcommand\pureqmstate {\psi}
\newcommand\mixedqmstate {\Gamma}
\newcommand\expval[2] {\trace{ {#1} {#2}}}
\newcommand\qmstate {\mixedqmstate}
\newtheorem{theorem}{Theorem}
\newtheorem{proposition}[theorem]{Proposition}
\newtheorem{lemma}[theorem]{Lemma}
\newtheorem{corollary}[theorem]{Corollary}
\DeclareMathOperator*{\stat}{\mathrm{stat}}
\begin{document}




\title{Revisiting density-functional theory of the total current density}


\author{Andre Laestadius}
\email{corresponding author: andre.laestadius@kjemi.uio.no}
\affiliation{Hylleraas Centre for Quantum Molecular Sciences, Department of Chemistry, University of Oslo, P.O.\ Box 1033 Blindern, N-0315 Oslo, Norway}

\author{Markus Penz}
\affiliation{Department of Mathematics, University of Innsbruck, Technikerstra{\ss}e 13/7, A-6020 Innsbruck, Austria}

\author{Erik I. Tellgren}
\email{corresponding author: erik.tellgren@kjemi.uio.no}
\affiliation{Hylleraas Centre for Quantum Molecular Sciences, Department of Chemistry, University of Oslo, P.O.\ Box 1033 Blindern, N-0315 Oslo, Norway}

\date{\today}

\begin{abstract}

Density-functional theory requires an extra variable besides the electron density in order to properly incorporate magnetic-field effects. In a time-dependent setting, the gauge-invariant, total current density takes that role. A peculiar feature of the static ground-state setting is, however, that the gauge-dependent paramagnetic current density appears as the additional variable instead.
An alternative, exact reformulation in terms of the total current density has long been sought but to date a work by Diener is the only available candidate.  In that work, an unorthodox variational principle was used to establish a ground-state 
density-functional theory of the total current density as well as an accompanying Hohenberg--Kohn-like result.
We here reinterpret and clarify Diener's formulation based on a maximin variational principle. Using simple facts about convexity implied by the resulting variational expressions, we prove that Diener's formulation is unfortunately not capable of reproducing the correct ground-state energy and, furthermore, that the suggested construction of a Hohenberg--Kohn map  contains an irreparable mistake. 
\end{abstract}

\pacs{02.30.Jr 02.30.Sa} 
\keywords{Hohenberg--Kohn theorem, current-density-functional theory, magnetic systems, density-functional theory} 

\maketitle

\section{Introduction} 

The Hohenberg--Kohn theorem is commonly regarded as the theoretical foundation of density-functional theory (DFT). 
Omitting technical points~\cite{Lieb1983,Lammert2018,Garrigue2018}, it asserts that the electron density determines the external potential (up to a constant) and therefore the Hamiltonian and all system properties~\cite{Hohenberg1964}.
To include arbitrary magnetic fields into the formalism, DFT needs to be supplemented by an additional basic variable. In current-density-functional theory (CDFT) the paramagnetic current density takes that role~\cite{Vignale1987}. It is also possible to forego any attempt to find a universal functional independent of the external potentials and instead have a formalism that is parametrically dependent on the magnetic field~\cite{GRAYCE}. 
A peculiar feature of CDFT is that it is the paramagnetic current density, and not the gauge-invariant total current density, that enters as a basic variable. This leaves a disconnect between ground-state CDFT and the time-dependent version of the theory, which is naturally formulated using the total current density~\cite{VIGNALE_PRB70_201102,VIGNALE_PRL77_2037}. Additionally, the total current density avoids practical issues arising from having to extract the gauge invariant part of the paramagnetic current density in approximate density functionals~\cite{Trickey,TELLGREN_JCP140_034101,TellgrenSolo2018}. As far as a CDFT for ground states formulated with the total current density is concerned, the question if a Hohenberg--Kohn theorem holds is still open and has attracted some recent attention \cite{Tellgren2012,LaestadiusBenedicks2014,Ruggenthaler2015,Tellgren2018,Garrigue2019,Garrigue2019b}. Several authors have realized that the total current density at best fits awkwardly into standard density-functional approaches and that, in fact, it is incompatible with the \emph{standard} energy minimization principle~\cite{VIGNALE_IJQC113_1422,Tellgren2012,LaestadiusBenedicks2015,TellgrenSolo2018}. However, it has been remarked that energy maximization with respect to the current density is not excluded by any known result~\cite{Tellgren2012} and recent work has shown that the Maxwell--Schr\"odinger energy minimization principle naturally leads to a density-functional theory that features the total current density~\cite{TellgrenSolo2018}. 

To date, a work by Diener~\cite{Diener} is the only candidate for a density-functional theory of the total current that does not modify the underlying Schr\"odinger equation. Logical gaps in his formulation have been identified before~\cite{Tellgren2012,LaestadiusBenedicks2014}, although one specific criticism was mistaken (Proposition~8 in Ref.~\onlinecite{LaestadiusBenedicks2014}, which we correct below at the end of Sec.~\ref{sec:OrthodoxDiener}). Nonetheless, despite the gaps, Diener's unique approach is interesting as it comes tantalizingly close to succeeding and it has so far been unclear whether the approach can be rigorously completed.

In this work, we first clarify the underlying assumptions in Diener's approach by reinterpreting it as based on a maximin variational principle. Based on simple facts about convexity of the resulting energy functional, it can be concluded that Diener's approach is neither capable of reproducing the ground-state energy nor the correct total current density. 
We also establish that Diener's construction of a Hohenberg--Kohn map suffers from an irreparable error: the selection of a vector potential via a stationary search over current densities is not correct.
Our analysis is very general and applies even if previously identified issues~\cite{Tellgren2012,LaestadiusBenedicks2014} could somehow be resolved.

\section{Preliminaries}
\label{sec:Prel}

Our point of departure is the time-independent magnetic Schr\"odinger equation for electrons with the Hamiltonian 
(in SI-based atomic units, compared to Diener~\cite{Diener} we use the convention $e\A \to \A$ and $-e\Phi \to v$ for the potentials)
\begin{equation}
  H(v,\A) = \frac{1}{2} \sum_j \left( -\rmi\nabla_j + \A(\mathbf{r}_j) \right)^2 + \sum_j v(\mathbf{r}_j) + W.
\end{equation}
Here $(v,\mathbf{A})$ are the external electromagnetic potentials and $W = \sum_{i<j} r_{ij}^{-1}$ is the electron--electron repulsion operator. We use the short-hand notation $H_0 = H(0,\mathbf{0})$ for the universal part of the Hamiltonian. Spin has no bearing on the present work and we therefore leave out all spin degrees-of-freedom from the notation.

For pure states $\pureqmstate(\mathbf{r}_1,\ldots,\mathbf{r}_N)$, where $\Gamma = \vert \pureqmstate\rangle\langle \pureqmstate\vert $ is the density matrix, the particle density and paramagnetic current density are given by, respectively, 
\begin{equation}
    \begin{split} \label{eq:dens-para}
  \rho_{\pureqmstate}(\mathbf{r}_1) & = N \int |\pureqmstate|^2 \rmd\mathbf{r}_2 \cdots \rmd\mathbf{r}_N,
  \\
  \jpara_{\pureqmstate}(\mathbf{r}_1) & = N \, \mathrm{Im} \int \bar{\pureqmstate} \nabla_1 \pureqmstate \rmd\mathbf{r}_2 \cdots \rmd\mathbf{r}_N,
  \end{split}
\end{equation}
and with well-known extensions to mixed states. Under a gauge transformation $\A \mapsto \A + \nabla f$, the paramagnetic current density transforms as $\jpara \mapsto \jpara - \rho \nabla f$.  The gauge-invariant, total current density is thus given by $\mathbf{j} = \jpara+\rho\A$.

From a direct calculation, using the densities defined in Eq.~\eqref{eq:dens-para}, 
\begin{equation}\label{eq:known-to-CDFTist}
\begin{split}
    \expval{H(v,\A)}{\qmstate} &= \expval{H_0}{\qmstate} + \int \jpara_\qmstate \cdot \A \rmd \x \\
    &\quad +\int \rho_\qmstate (v+\tfrac{1}{2}\vert \A\vert^2) \rmd \x.
\end{split}
\end{equation}
Using Eq.~\eqref{eq:known-to-CDFTist} the ground-state energy can be obtained from the expression
\begin{equation}
  \begin{split}
    & E(v,\A)  = \inf_{\qmstate} \expval{H(v,\A)}{\qmstate}
            \\
       & = \inf_{\rho,\jpara} \left\{  \FVR(\rho,\jpara) + \int (\rho (v+\tfrac{1}{2} \vert \A \vert^2)  + \jpara\cdot\A) \rmd\mathbf{r}  \right\},
  \end{split}
\end{equation}
where we have introduced the Vignale--Rasolt universal functional \cite{Vignale1987},
\begin{equation}\label{eq:FVR}
    \FVR(\rho,\jpara) = \inf_{\qmstate \mapsto (\rho,\jpara)} \expval{H_0}{\qmstate}.
\end{equation}
A recent result establishes that the infimum in Eq.~\eqref{eq:FVR} can be replaced by a minimum for any physically reasonable densities~\cite{Kvaal2020}. 
It is known that the paramagnetic current density (together with $\rho$) does not determine the external potentials~\cite{Capelle2002}, although the original proof idea~\cite{Vignale1987} can be used to establish a mapping from $(\rho,\jpara)$ to nondegenerate ground states~\cite{Tellgren2012}. This was termed a \emph{weak} Hohenberg--Kohn result in Ref.~\onlinecite{LaestadiusTellgren2018}, where the degenerate case was further analysed.

Another formulation is obtained by introducing the Grayce--Harris semiuniversal density functional \cite{GRAYCE},
\begin{equation}\label{eqGHfun}
  \begin{split}
  &\FGH(\rho,\A)  = \inf_{\qmstate \mapsto \rho} \expval{H(0,\A)}{\qmstate}
                  \\
                  &\quad  =  \int \tfrac{1}{2} \rho \vert \A \vert^2 \rmd\mathbf{r}+ \inf_{\jpara} \left\{ \FVR(\rho,\jpara) +\int \jpara\cdot\A \rmd\mathbf{r}   \right\},
     \end{split}
\end{equation}
which enables the ground-state energy to be written as the (magnetic field-) B-DFT variational principle,
\begin{equation}
    E(v,\A) = \inf_{\rho} \left\{  \FGH(\rho,\A)  + \int \rho  v \rmd\mathbf{r} \right\}.
\end{equation}
The semiuniversal nature of $\FGH(\rho,\A)$ directly leads to a type of Hohenberg--Kohn result: For every fixed $\A$, a positive ground state state $\rho(\mathbf{r}) > 0$ determines $v$ up to a constant~\cite{GRAYCE}.

The relationship between the above two frameworks has recently been highlighted and analyzed~\cite{TellgrenSolo2018,REIMANN_JCTC13_4089}, with particular focus on convexity properties and variational principles connecting the formalisms. [See Appendix~\ref{appConvexStuff} for basic definitions of convexity and related notions.] At least for small vector potentials, the physical interpretation is that convexity of the energy in $\A$ is associated with diamagnetism, while concavity in $\A$ is associated with paramagnetism.  Here, we note that the mixed-state version of $\FVR$ defined in Eq.~\eqref{eq:FVR} is jointly convex in $(\rho,\jpara)$ (but the pure-state version is not, see Proposition~8 in Ref.~\onlinecite{Laestadius2014}). The mixed state version of $\FGH$ is likewise convex in $\rho$; however, it is neither convex nor concave in $\A$. As discussed in Ref.~\onlinecite{TellgrenSolo2018}, the Grayce--Harris functional is paraconcave (``concave up to a square'') in $\A$, i.e., the difference $\bar{\FGH}(\rho,\A) = \FGH(\rho,\A) -  \int \tfrac{1}{2} \rho \vert\A\vert^2 \rmd\mathbf{r}$ is concave. Loosely interpreted in physical terms this means that all systems appear paramagnetic when the diamagnetic term is removed. The corresponding transformation of the ground-state energy $E(v,\A)$ is a change of variables $\bar{E}(u,\A) = E(u-\tfrac{1}{2} \vert\A\vert^2, \A)$, which makes $\bar{E}(u,\A)$ jointly concave in $(u,\A)$, unlike the original $E(v,\A)$.

That $\FGH(\rho,\A)$ cannot be convex in $\A$ is fairly obvious from the physical interpretation. However, since this property will be important in the further results below, we give a full proof.
\begin{proposition}
  For some $\rho$, the Grayce--Harris functional $\FGH(\rho,\A)$ is not convex in $\A$.
\end{proposition}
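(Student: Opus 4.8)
The plan is to reduce the statement to a one-line energy estimate by exploiting a time-reversal symmetry of the functional. First I would record that $H_0$ is a real operator, so complex conjugation $\pureqmstate\mapsto\bar{\pureqmstate}$ preserves the density while flipping the paramagnetic current, $\jpara\mapsto-\jpara$, and sends $H(0,\A)$ to $H(0,-\A)$. Taking the infimum in Eq.~\eqref{eqGHfun} then yields $\FGH(\rho,\A)=\FGH(\rho,-\A)$ for every $\rho$. Consequently, if $\A\mapsto\FGH(\rho,\A)$ were convex, restricting it to the segment joining $-\A$ and $\A$ would force $\FGH(\rho,\mathbf 0)\le\tfrac12(\FGH(\rho,\A)+\FGH(\rho,-\A))=\FGH(\rho,\A)$. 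Hence it suffices to produce a single $\rho$ together with some $\A$ for which $\FGH(\rho,\A)<\FGH(\rho,\mathbf 0)$, i.e.\ a genuinely paramagnetic lowering of the energy at fixed density.

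For the estimate I would use the explicit form in Eq.~\eqref{eqGHfun}. At $\A=\mathbf 0$ one has $\FGH(\rho,\mathbf 0)=\inf_{\jpara}\FVR(\rho,\jpara)$; let $\jparaDm\neq\mathbf 0$ be a minimizing paramagnetic current and $\qmstate_0$ an associated state of density $\rho$. Inserting the trial potential $\A=-\eps\jparaDm$ and using $\qmstate_0$ as trial state gives the bound $\FGH(\rho,-\eps\jparaDm)\le\FVR(\rho,\jparaDm)-\eps\int\ab{\jparaDm}^2\rmd\x+\tfrac{\eps^2}{2}\int\rho\,\ab{\jparaDm}^2\rmd\x$. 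Since $\jparaDm$ attains the minimum, the first term equals $\FGH(\rho,\mathbf 0)$, so the linear term $-\eps\int\ab{\jparaDm}^2\rmd\x<0$ dominates the $O(\eps^2)$ term for small $\eps>0$, giving $\FGH(\rho,-\eps\jparaDm)<\FGH(\rho,\mathbf 0)$ and closing the argument through the reduction above. More generally one only needs a current-carrying trial state whose paramagnetic gain $\left(\int\ab{\jparaDm}^2\rmd\x\right)^2\big/\left(2\int\rho\,\ab{\jparaDm}^2\rmd\x\right)$ exceeds its energy excess above the constrained minimum.

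Everything above is routine; the real content, and the main obstacle, is exhibiting a density $\rho$ whose density-constrained ground state genuinely carries current, so that the minimizing $\jparaDm$ is nonzero. This is necessarily a many-body effect: for a single particle the minimizer in Eq.~\eqref{eq:FVR} is $\pureqmstate=\sqrt{\rho}$, which is real and current-free, and one checks that $\FGH(\rho,\A)$ is then convex in $\A$. I would therefore seek $\rho$ from an $N\ge 2$ (or interacting) system with nonvanishing total angular momentum, for instance a degenerate, rotationally symmetric current-carrying manifold: there the symmetric density is representable within the relevant manifold only by states of definite, nonzero current $\pm\jparaDm$, forcing $\jparaDm\neq\mathbf 0$. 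The delicate point to certify is precisely that no competing zero-current state of the same density undercuts the energy—equivalently, that the paramagnetic gain outweighs any energy excess—so that a nonzero minimizing current survives at the constrained minimum.
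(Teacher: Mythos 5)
Your argument is correct and is essentially the paper's own proof: both use time-reversal symmetry, $\FGH(\rho,\A)=\FGH(\rho,-\A)$, to reduce convexity to the midpoint bound $\FGH(\rho,\mathbf 0)\le\FGH(\rho,\A)$, and then violate it by inserting a current-carrying density-constrained minimizer as trial state for a small $\A$ anti-aligned with its current, so that the linear paramagnetic gain beats the $O(\eps^2)$ diamagnetic cost. The ``delicate point'' you flag is precisely the hypothesis the paper also assumes rather than derives, and your worry about zero-current competitors is moot: take $\rho$ to be the ground-state density of a real Hamiltonian $H(v,\mathbf 0)$ whose degenerate ground level carries current (an open-shell atom with $L_z=\pm M$, as in the paper's later carbon-atom example in Proposition~\ref{prop:amazinEIT}); since $\int\rho v\rmd\x$ is fixed by the constraint $\qmstate\mapsto\rho$, any such ground state is automatically a minimizer of $\inf_{\qmstate\mapsto\rho}\expval{H_0}{\qmstate}$, and your estimate only needs \emph{one} minimizer with nonzero current, not uniqueness or the absence of current-free ones.
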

\begin{proof}
Consider a $\rho$ such that for $\A=0$ one has a ground-state degeneracy that allows for a current $\pm \jpara_{\mathrm{gs}} \neq \mathbf{0}$. 
Both signs are possible for $\jpara_{\mathrm{gs}}$ due to time-reversal symmetry. 
Now, take $\A \neq 0$ such that for one of the ground states one has $\int  \jpara_{\mathrm{gs}} \cdot\A \rmd \x  = -\left| \int  \jpara_{\mathrm{gs}} \cdot \A \rmd\x \right|<0$. 
Then for sufficiently small but nonzero $\A$,
\begin{equation} \label{eq:PenzHawkEye}
  \begin{split}
    &\FGH(\rho,\A)  =  \int  \tfrac{1}{2} \rho |\A|^2 \rmd\mathbf{r}  + \inf_{\qmstate\mapsto\rho} \left\{ \expval{H_0}{\qmstate} +  \int \jpara_{\qmstate}\cdot\A \rmd\mathbf{r}   \right\}
           \\
           & \quad\leq G(\rho,\mathbf 0) + \int \tfrac{1}{2} \rho \vert \A \vert^2 \rmd \x  - \left| \int  \jpara_{\mathrm{gs}}\cdot\A \rmd \x  \right|
           < \FGH(\rho,\mathbf 0).
  \end{split}
\end{equation}
On the other hand, invoking time-reversal symmetry, namely $\FGH(\rho,+\A) = \FGH(\rho,-\A)$, the assumption of convexity of $\FGH(\rho,\A)$ in $\A$ would have entailed $G(\rho,\A) = \tfrac{1}{2} (\FGH(\rho,\A) + \FGH(\rho,-\A) ) \geq G(\rho,\mathbf 0)$, in contradiction with Eq.~\eqref{eq:PenzHawkEye}.
\end{proof}

Note that the above result substantially understates the extent of the non-convexity---it is not restricted at all to very special densities $\rho$. For example, some $\rho$ correspond to paramagnetic systems that have concave $G(\rho,\A)$ in $\A$. Moreover, most $\rho$ are such that increasing the magnetic-field strength will reorder the energy spectrum so that states with permanent paramagnetic currents eventually become the ground state. These level crossings introduce non-convexity as well.

\section{Diener's formulation as a maximin variational principle}

Next, we turn to Diener's unconventional attempt to formulate a total current-density-functional theory. Diener's formalism is greatly simplified and clarified by starting from the ground-state energy and algebraically manipulating the formula until we obtain a variational expression that can be related to his working equations. Taking the B-DFT variational principle as the point of departure, it is indeed sufficient to rewrite the Grayce--Harris functional. Letting $\jarb$ denote an arbitrary current density, we begin by adding an energy term that clearly gives a vanishing net contribution:
\begin{widetext}
\begin{equation}
    \label{eqDinminimax}
  \begin{split}
    G(\rho,\A) & =  \int \tfrac{1}{2} \rho \vert\A \vert^2 \rmd\mathbf{r} + \inf_{\jpara} \left\{  \FVR(\rho,\jpara) +\int \jpara\cdot\A \rmd\mathbf{r}  - \inf_{\jarb} \int \frac{|\jpara + \rho \A - \jarb|^2}{2\rho} \rmd\mathbf{r} \right\}
    \\
    & = \inf_{\jpara} \sup_{\jarb} \left\{ \FVR(\rho,\jpara)  + \int \jarb\cdot\A \rmd\mathbf{r} - \int \frac{|\jpara-\jarb|^2}{2\rho} \rmd\mathbf{r} \right\}  .
  \end{split}
\end{equation}
\end{widetext}
While $\jarb$ is a dummy variable that is being optimized over, its value at the solution to the above minimax problem will satisfy $\mathbf{k} = \jpara + \rho \A$ and hence exactly reproduce the total current density. This way, the issue that the correct energy cannot be obtained from a \emph{standard} minimization principle for the total current density is avoided. Using the general fact that $\inf_x \sup_y f(x,y) \geq \sup_y \inf_x f(x,y)$, we next obtain
\begin{equation}
  \label{eqDinmaximin}
  \begin{split}
    & G(\rho,\A) \\
    &\geq \sup_{\jarb} \inf_{\jpara} \left\{\FVR(\rho,\jpara) + \int \jarb\cdot\A \rmd\mathbf{r}  - \int \frac{|\jpara-\jarb|^2}{2\rho} \rmd\mathbf{r} \right\}
    \\
    & = \sup_{\jarb} \left\{ \FD(\rho,\jarb) + \int \jarb\cdot\A \rmd\mathbf{r}  \right\}
     =: \GD(\rho,\A).
  \end{split}
\end{equation}
We have above introduced $\GD(\rho,\A)$ and identified Diener's proposed total current-density functional 
\begin{equation}
  \label{eqDienerFun}
  \begin{split}
    \FD(\rho,\jarb) & = \inf_{\jpara}  \left\{ \FVR(\rho,\jpara) - \int \frac{|\jpara-\jarb|^2}{2\rho} \rmd\mathbf{r} \right\}
    \\
      & = \inf_{\qmstate\mapsto \rho} \left\{ \expval{H_0}{\qmstate} - \int \frac{|\jpara_{\qmstate}-\jarb|^2}{2\rho} \rmd\mathbf{r} \right\}.
  \end{split}
\end{equation}

The issue now arises as to whether the above maximin principle always achieves equality in Eq.~\eqref{eqDinmaximin}. If this were true, we would have succeeded in expressing the ground-state energy in terms a universal functional $\FD$ of the total current density. Unfortunately, this can immediately be disproven on the basis of convexity properties: The right-hand side of Eq.~\eqref{eqDinmaximin}, i.e., $\GD$, is manifestly convex in $\A$, and hence can only describe diamagnetic systems, whereas the Grayce--Harris functional $\FGH(\rho,\A)$ is nonconvex in $\A$. This establishes the following result:
\begin{proposition}
  \label{PropGDneqFGH}
  For some $(\rho,\A)$, we have a strict inequality $\FGH(\rho,\A) > \GD(\rho,\A)$.
\end{proposition}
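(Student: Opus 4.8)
The plan is to exploit the sharp contrast in convexity between the two functionals in the variable $\A$, which is exactly the strategy flagged in the discussion preceding the statement. First I would observe that $\GD(\rho,\A)$ is convex in $\A$. This is immediate from its definition in Eq.~\eqref{eqDinmaximin}: for each fixed $\jarb$ the map $\A \mapsto \FD(\rho,\jarb) + \int \jarb\cdot\A \rmd\x$ is affine (linear plus a $\A$-independent constant) in $\A$, and $\GD(\rho,\A)$ is the pointwise supremum over $\jarb$ of this family. A pointwise supremum of affine functions is always convex---indeed $\GD(\rho,\cdot)$ is, up to a sign convention, the Legendre--Fenchel transform of $-\FD(\rho,\cdot)$---so $\GD(\rho,\cdot)$ is convex for every fixed $\rho$, with the understanding that it may take the value $+\infty$.

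Next I would assemble the two ingredients already available. The preceding Proposition guarantees that there exists a density $\rho$ for which $\FGH(\rho,\A)$ fails to be convex in $\A$. Moreover, the chain of (in)equalities culminating in Eq.~\eqref{eqDinmaximin}---the passage from $\inf\sup$ to $\sup\inf$---gives the pointwise bound $\FGH(\rho,\A) \geq \GD(\rho,\A)$, valid for every $\A$ and every $\rho$.

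I would then conclude by contradiction. Fix the density $\rho$ that witnesses non-convexity of $\FGH(\rho,\cdot)$. Suppose, toward a contradiction, that $\FGH(\rho,\A) = \GD(\rho,\A)$ held for every $\A$. Then $\FGH(\rho,\cdot)$ would coincide identically with the convex function $\GD(\rho,\cdot)$ and would therefore itself be convex in $\A$, contradicting the defining property of this $\rho$. Hence there must exist at least one vector potential $\A$ for which the pointwise bound is strict, $\FGH(\rho,\A) > \GD(\rho,\A)$, which is precisely the assertion of the Proposition.

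The argument is short and no genuinely hard step arises; it is essentially a one-line convexity observation combined with the always-valid inequality. The main points to verify with care are, first, that the supremum defining $\GD$ is genuinely taken over a family of functions \emph{affine} in $\A$, so that the standard sup-of-affine convexity result applies, and second, that the non-convexity supplied by the preceding Proposition and the strict inequality we extract are anchored to one and the same density $\rho$. I would also note that convexity of $\GD(\rho,\cdot)$ holds in the extended-real-valued sense even where the supremum equals $+\infty$, so no domain subtleties undermine the comparison.
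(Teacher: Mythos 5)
Your argument is correct and is essentially the paper's own: the text preceding the proposition establishes exactly that $\GD(\rho,\cdot)$ is manifestly convex in $\A$ (being a supremum over $\jarb$ of functions affine in $\A$), combines this with the nonconvexity of $\FGH(\rho,\cdot)$ from the preceding proposition and the maximin bound $\FGH \geq \GD$ of Eq.~\eqref{eqDinmaximin}, and concludes the strict inequality for some $(\rho,\A)$. Your only additions---spelling out the sup-of-affine observation, the extended-real-valued caveat, and the explicit contradiction anchored to the single witness density $\rho$---are faithful elaborations of the same proof, not a different route.
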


A remaining issue is whether Diener's functional $\FD(\rho,\jarb)$ or the variational principle for $\GD(\rho,\A)$ are useful for other purposes, such as reconstructing the correct external vector potential from an input pair $(\rho,\jtot = \jpara + \rho\A)$ or delivering the correct total current density from a pair $(\rho,\A)$. The former would establish a Hohenberg--Kohn-type mapping, since then $(\rho,\jtot)$ determines $(\rho,\A)$ up to a gauge. In a next step one could use the B-DFT extension of the Hohenberg--Kohn theorem to determine $v$~\cite{GRAYCE,LaestadiusBenedicksPenz,Garrigue2019}. In Diener's work, this is in fact the primary intended use of the minimization principle that defines $\FD$. Moreover, he relies heavily on the fact that a state $\qmstate$ and an arbitrary vector field $\jarb$ can be ``related'' through the effective vector potential
\begin{equation} \label{eq:aD}
  \aD(\qmstate,\jarb) := \frac{\jarb - \jpara_{\qmstate}}{\rho_{\qmstate}}.
\end{equation}
By definition we have $\jarb = \jpara_{\qmstate} + \rho_{\qmstate} \aD(\qmstate,\jarb)$, mimicking the relationship between the total current density, the paramagnetic current, and the actual external vector potential. If supplying the true total current density $\jtot = \jpara + \rho \A$ to $\FD(\rho,\jtot)$ always yields a minimizer $\qmstate_{\mathrm{m}}$ in Eq.~\eqref{eqDienerFun} such that $\aD(\qmstate_{\mathrm{m}},\jarb) = \A$, a Hohenberg--Kohn-type mapping would be established. More precisely, since the input to $\FD$ is gauge invariant, the external vector potential can at best be determined up to a gauge. Hence, we have to allow for $\aD(\qmstate_{\mathrm{m}},\jarb) = \A + \nabla f$ and multiple gauge dependent minimizers $\jparaDm$ in Eq.~\eqref{eqDienerFun}, one of which corresponds to a gauge in which $\aD(\qmstate_{\mathrm{m}},\jarb) = \A$. This weaker statement would be sufficient to establish the Hohenberg--Kohn-type mapping. Unfortunately, the next proposition shows that such an $\FD$-based mapping does not exist. 
\begin{proposition} \label{prop:EITmasterwork}
  For some $(\rho,\A)$, Diener's current density functional $\FD$ fails to reconstruct the external potential. That is, for any minimizer $\jparaDm$ in Eq.~\eqref{eqDienerFun} we have
  \begin{equation}
        \frac{\jtot - \jparaDm}{\rho} \neq \A.
  \end{equation}
\end{proposition}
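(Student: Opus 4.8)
The plan is to show that the sought reconstruction $\aD(\qmstate_{\mathrm{m}},\jtot)=\A$ is equivalent to the minimizer of $\FD(\rho,\jtot)$ coinciding with the physical paramagnetic current, and that this coincidence would force $\GD(\rho,\A)=\FGH(\rho,\A)$, contradicting Proposition~\ref{PropGDneqFGH}. First I would unwind the definition \eqref{eq:aD}. Feeding in the true total current $\jarb=\jtot=\jpara+\rho\A$, where $\jpara$ attains the inner infimum in \eqref{eqGHfun}, the condition $(\jtot-\jparaDm)/\rho=\A$ becomes, after multiplying through by $\rho$, simply $\jparaDm=\jtot-\rho\A=\jpara$. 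Thus the reconstruction succeeds precisely when the minimizing paramagnetic current in \eqref{eqDienerFun} equals the physical $\jpara$ that defines $\jtot$, and the proposition reduces to exhibiting a pair $(\rho,\A)$ for which no minimizer $\jparaDm$ equals $\jpara$.

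Next I would argue by contradiction. Suppose a minimizer with $\jparaDm=\jpara$ existed for some $(\rho,\A)$. Evaluating the bracket in \eqref{eqDienerFun} at $\jpara$ and using $\jpara-\jtot=-\rho\A$ gives $\FD(\rho,\jtot)=\FVR(\rho,\jpara)-\tfrac12\int\rho|\A|^2\rmd\x$. Adding $\int\jtot\cdot\A\rmd\x$ and substituting $\jtot=\jpara+\rho\A$, the diamagnetic terms recombine to reproduce exactly the Grayce--Harris functional, $\FD(\rho,\jtot)+\int\jtot\cdot\A\rmd\x=\FVR(\rho,\jpara)+\int\jpara\cdot\A\rmd\x+\tfrac12\int\rho|\A|^2\rmd\x=\FGH(\rho,\A)$, where the final equality uses that $\jpara$ attains the inner infimum in \eqref{eqGHfun}. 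On the other hand, taking $\jarb=\jtot$ in the supremum defining $\GD$ in \eqref{eqDinmaximin} yields $\GD(\rho,\A)\ge\FD(\rho,\jtot)+\int\jtot\cdot\A\rmd\x=\FGH(\rho,\A)$. Combined with the always-valid reverse bound $\GD(\rho,\A)\le\FGH(\rho,\A)$ from \eqref{eqDinmaximin}, this gives $\GD(\rho,\A)=\FGH(\rho,\A)$. Choosing $(\rho,\A)$ to be a pair witnessing the strict inequality of Proposition~\ref{PropGDneqFGH} then delivers the contradiction, so for that pair every minimizer $\jparaDm$ must fail $\aD=\A$.

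The main obstacle, and essentially the only point requiring care, is representability: I must select a pair $(\rho,\A)$ that simultaneously witnesses $\FGH>\GD$ and for which $\jpara$ is genuinely a ground-state paramagnetic current, so that $\jtot=\jpara+\rho\A$ is a bona fide total current supplied to $\FD$. I would secure attainment of the inner minimizer via the result cited after \eqref{eq:FVR}~\cite{Kvaal2020}, which replaces the infimum defining $\FVR$ by a minimum and hence guarantees a solution to the constrained problem underlying \eqref{eqGHfun}. I would also note that the argument is insensitive to ground-state degeneracy: it suffices that one physical $\jpara$ define $\jtot$, and that the contradiction then excludes \emph{every} minimizer $\jparaDm$ reproducing it. Finally, no existence of a minimizer for $\FD$ itself need be assumed, since the reasoning is conditional---were any minimizer to achieve $\aD=\A$, the equality $\GD=\FGH$ would already follow and contradict Proposition~\ref{PropGDneqFGH}.
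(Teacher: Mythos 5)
Your proof is correct and follows essentially the same route as the paper's: assume arguendo that a minimizer of $\FD(\rho,\jtot_0)$ with $\jtot_0=\jpara_0+\rho\A$ satisfies $\jparaDm=\jtot_0-\rho\A=\jpara_0$, evaluate $\FD(\rho,\jtot_0)+\int\jtot_0\cdot\A\rmd\x$ to recover $\FGH(\rho,\A)$, conclude $\GD(\rho,\A)\geq\FGH(\rho,\A)$ and hence equality via Eq.~\eqref{eqDinmaximin}, in contradiction with Proposition~\ref{PropGDneqFGH}. The only wrinkle is your appeal to Ref.~\onlinecite{Kvaal2020} for attainment of the inner infimum over $\jpara$ in Eq.~\eqref{eqGHfun}---that result gives attainment of the constrained search over states at fixed $(\rho,\jpara)$, not attainment of the minimization in $\jpara$---but since the paper's own proof likewise just restricts to pairs $(\rho,\A)$ for which the minimax problem admits solutions, your argument is no less rigorous on this point.
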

\begin{proof} Fix an arbitrary pair $(\rho,\A)$ for which there exist current densities $(\jpara_0,\jtot_0=\jpara_0 + \rho\A)$ that solve the minimax problem Eq.~\eqref{eqDinminimax}. Inserting $\jtot_0$ into Diener's functional yields
\begin{equation}
  \begin{split}
    \FD(\rho,\jtot_0) & = \inf_{\jpara}  \left\{ \FVR(\rho,\jpara) - \int \frac{|\jpara-\jtot_0|^2}{2\rho} \rmd\mathbf{r} \right\}
    \\
    & =\FVR(\rho,\jparaDm) - \int \frac{|\jparaDm-\jtot_0|^2}{2\rho} \rmd\mathbf{r},
  \end{split}
\end{equation}
where $\jparaDm$ is a minimizer. Now assume, arguendo, that this minimizer can always be chosen to satisfy
\begin{equation}
  \label{eqDienerAlwaysA}
  \frac{\jtot_0 - \jparaDm}{\rho} = \A.
\end{equation}
But this is equivalent to $\jparaDm = \jtot_0 - \rho \A = \jpara_0$. As a direct consequence, we have the lower bound
\begin{equation}
  \begin{split}
    &\GD(\rho,\A)  = \sup_{\jarb} \left\{ \FD(\rho,\jarb) + \int \jarb\cdot\A \rmd\mathbf{r} \right\}
    \\
    & \geq \FD(\rho,\jtot_0) + \int \jtot_0\cdot\mathbf{A} \rmd\mathbf{r} 
    \\
    & =  \FVR(\rho,\jpara_0) + \int \jtot_0\cdot\mathbf{A} \rmd\mathbf{r}  - \int \frac{|\jpara_0-\jtot_0|^2}{2\rho} \rmd\mathbf{r}
    \\
    & = \FVR(\rho,\jpara_0) + \int (\jpara_0+\rho\A)\cdot\mathbf{A} \rmd\mathbf{r}  - \int \frac{|\rho\A|^2}{2\rho} \rmd\mathbf{r} \\
    & = \FGH(\rho,\A).
   \end{split}
\end{equation}
Combining the above bound with the fact that $\GD(\rho,\A) \leq \FGH(\rho,\A)$ from Eq.~\eqref{eqDinmaximin}, we have established that $\GD(\rho,\A) = \FGH(\rho,\A)$ for arbitrary $(\rho,\A)$. This, however, is impossible in light of  Proposition~\ref{PropGDneqFGH}. Hence, we conclude that the assumption that the minimizer $\jparaDm$ can always be chosen to satisfy Eq.~\eqref{eqDienerAlwaysA} is false, which completes the proof.
\end{proof}

It should be noted that we do not need to explicitly impose that the total current density arising from an eigenstate is divergence-free. This condition, $\nabla\cdot\jarb=0$, is not needed in the minimax principle for $\FGH$. However, it possibly makes a difference in the maximin principle $\GD$, yet adding it does not circumvent the problems noted above.

Finally, it must be remarked that in his original work, Diener actually relies on a stationarity principle for a quantity $G_{\mathrm{stat}}$, rather than on the above maximin principle for $\GD$. However, this difference is inessential and, in fact, only adds to the problems identified above. The following bounds are immediate:
\begin{widetext}
\begin{equation}
   \begin{split}
    \FGH(\rho,\A) & = \inf_{\jpara} \sup_{\jarb} \left\{  \FVR(\rho,\jpara) + \int \jarb\cdot\A \rmd\mathbf{r} - \int \frac{|\jpara-\jarb|^2}{2\rho} \rmd\mathbf{r} \right\}
          \\
          & \geq
    \GD(\rho,\A)  = \sup_{\jarb} \inf_{\jpara} \left\{\FVR(\rho,\jpara) + \int \jarb\cdot\A \rmd\mathbf{r}  - \int \frac{|\jpara-\jarb|^2}{2\rho} \rmd\mathbf{r} \right\}
          \\
          & \geq
    G_{\mathrm{stat}}(\rho,\A)  = \stat_{\jarb} \inf_{\jpara} \left\{ \FVR(\rho,\jpara)  + \int \jarb\cdot\A \rmd\mathbf{r} - \int \frac{|\jpara-\jarb|^2}{2\rho} \rmd\mathbf{r} \right\}.
   \end{split}
\end{equation}
\end{widetext}
Hence, by Proposition~\ref{PropGDneqFGH} it follows that $\FGH(\rho,\A) > \GD(\rho,\A) \geq G_{\mathrm{stat}}(\rho,\A)$, for some $(\rho,\A)$. The problems with the maximin principle for $\GD$ thus directly carry over to the stat-min principle for $G_{\mathrm{stat}}$. Naturally, a pure minimization principle, obtained by replacing the maximization over $\jarb$ by a minimization, can only make problems worse.

\section{Diener's original formulation}
\label{sec:OrthodoxDiener}

The previous section reinterpreted Diener's formulation in terms of a maximin principle. In the present section, we provide a direct disproof in terms of Diener's original concepts. As was already clear from Proposition~\ref{prop:EITmasterwork}, Diener's proof is unfortunately in error and $\FD$ cannot be used for a Hohenberg--Kohn result in CDFT.

Recall Eq.~\eqref{eq:aD}, where for given $\qmstate$ and $\jarb$ we have the vector potential $\aD(\qmstate,\jarb) = (\jarb - \jpara_{\qmstate})/\rho_{\qmstate}$. The following proposition is a direct consequence of Eq.~\eqref{eq:known-to-CDFTist}.  

\begin{proposition}[Eq.~(6) in Diener~\cite{Diener}] \label{prop:Eq6D}
Let $H(v,\A)$ be fixed. Then for any $\qmstate$ and any current density $\jarb$
\begin{equation}  \label{AL:eq1} 
 \begin{split}
	 \expval{H(v,\A)}{\qmstate} = \ED(\qmstate,\jarb)   + \int (   \jarb \cdot \A + \rho_{\qmstate} v )\rmd\x & \\
	  +  \int \tfrac{1}{2} \rho_{\qmstate} \vert \A- \aD(\qmstate,\jarb) \vert^2 \rmd\x &,
\end{split}
\end{equation}
with 
\begin{align}
	 \ED(\qmstate,\jarb)  &= \expval{H_0}{\qmstate} - \int \frac{ \vert \jarb - \jpara_{\qmstate} \vert^2}{2\rho_{\qmstate}}\rmd\x .
 \end{align}
\label{prop:D6}
\end{proposition}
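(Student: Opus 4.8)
The plan is to prove the identity by a direct algebraic rearrangement of Eq.~\eqref{eq:known-to-CDFTist}, the point being to trade the paramagnetic current $\jpara_\qmstate$ appearing there for the arbitrary field $\jarb$ at the cost of a completed-square term in $\A$. First I would rewrite the defining relation for the effective vector potential, Eq.~\eqref{eq:aD}, as $\jpara_\qmstate = \jarb - \rho_\qmstate\,\aD(\qmstate,\jarb)$ and substitute it into the cross term $\int \jpara_\qmstate\cdot\A\rmd\x$ of Eq.~\eqref{eq:known-to-CDFTist}. This immediately produces the term $\int \jarb\cdot\A\rmd\x$ demanded by the statement, together with the correction $-\int \rho_\qmstate\,\aD(\qmstate,\jarb)\cdot\A\rmd\x$.

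The key step is then to merge this correction with the diamagnetic contribution $\int \tfrac{1}{2}\rho_\qmstate\vert\A\vert^2\rmd\x$ already present in Eq.~\eqref{eq:known-to-CDFTist}. Because both are controlled by the common weight $\rho_\qmstate$, I would complete the square pointwise as $\tfrac{1}{2}\rho_\qmstate\vert\A\vert^2 - \rho_\qmstate\,\aD\cdot\A = \tfrac{1}{2}\rho_\qmstate\bigl(\vert\A-\aD\vert^2 - \vert\aD\vert^2\bigr)$, whose first term is exactly the last integrand appearing in the proposition. For the leftover I would use Eq.~\eqref{eq:aD} once more to evaluate $\tfrac{1}{2}\rho_\qmstate\vert\aD(\qmstate,\jarb)\vert^2 = \vert\jarb-\jpara_\qmstate\vert^2/(2\rho_\qmstate)$, which is precisely the term subtracted from $\expval{H_0}{\qmstate}$ in the definition of $\ED(\qmstate,\jarb)$.

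Collecting the pieces finishes the proof: the $\expval{H_0}{\qmstate}$ term together with $-\int \vert\jarb-\jpara_\qmstate\vert^2/(2\rho_\qmstate)\rmd\x$ assembles into $\ED(\qmstate,\jarb)$, the potential term $\int \rho_\qmstate v\rmd\x$ passes through untouched, and the remaining contributions are exactly $\int \jarb\cdot\A\rmd\x$ and $\int \tfrac{1}{2}\rho_\qmstate\vert\A-\aD\vert^2\rmd\x$, reproducing Eq.~\eqref{AL:eq1}. I do not expect any genuine obstacle, since the claim is an exact pointwise identity holding for every $\qmstate$ and every $\jarb$; the only things to watch are the bookkeeping of the cross terms and the cancellation of $\int\jarb\cdot\A\rmd\x$ against $\int\rho_\qmstate\,\aD\cdot\A\rmd\x$. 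The one caveat worth stating is that $\aD(\qmstate,\jarb)$ divides by $\rho_\qmstate$, so the manipulation is understood on the support of $\rho_\qmstate$; this is the same mild regularity proviso already implicit in Eq.~\eqref{eq:aD} and does not affect the conclusion.
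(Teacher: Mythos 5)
Your proof is correct and is essentially the paper's own argument: the paper gives no separate proof but declares the proposition ``a direct consequence of Eq.~\eqref{eq:known-to-CDFTist}'', and the intended computation is exactly your substitution $\jpara_{\qmstate} = \jarb - \rho_{\qmstate}\,\aD(\qmstate,\jarb)$ into the cross term followed by completing the square, $\tfrac{1}{2}\rho_{\qmstate}\vert\A\vert^2 - \rho_{\qmstate}\,\aD\cdot\A = \tfrac{1}{2}\rho_{\qmstate}\bigl(\vert\A-\aD\vert^2 - \vert\aD\vert^2\bigr)$ together with $\tfrac{1}{2}\rho_{\qmstate}\vert\aD\vert^2 = \vert\jarb-\jpara_{\qmstate}\vert^2/(2\rho_{\qmstate})$. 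One cosmetic slip only: your closing caveat mentions a ``cancellation of $\int\jarb\cdot\A\rmd\x$ against $\int\rho_{\qmstate}\,\aD\cdot\A\rmd\x$,'' but no such cancellation occurs---the latter term is absorbed into the completed square, exactly as your middle paragraph correctly does---so the argument as actually executed is sound.
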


Note that $\ED(\qmstate,\jarb) = \expval{H_{\aD(\qmstate,\jarb)}}{\qmstate}$ with $H_{\aD(\qmstate,\jarb)} = H_0 - \tfrac{1}{2} \sum_j \vert \aD(\qmstate,\jarb; \mathbf r_j) \vert^2$, i.e., $\ED$ can be viewed as an expectation value over a state-dependent Hamiltonian $H_{\aD(\qmstate,\jarb)}$. Equation~\eqref{AL:eq1} can also be stated as 
 \begin{equation} \label{eq:G1}
 \begin{split}
	 \ED(\qmstate,\jarb) + \int(   \jarb\cdot \A +  \rho_{\qmstate} v) \rmd \x = \expval{H(v,\A)}{\qmstate}& \\
 -  \int \tfrac{1}{2} \rho_{\qmstate} \vert \A- \mathbf a(\qmstate,\jarb) \vert^2  \rmd \x &.
\end{split}
 \end{equation}
On the left-hand side of Eq.~\eqref{eq:G1} we have $\ED$ (albeit not a functional of the densities) and a \emph{linear} coupling between $(v,\A)$ and the variables $(\rho,\jarb)$. This mimics the situation that one has in density-only DFT, however, with one important difference: the expression on the left-hand side of Eq.~\eqref{eq:G1} does not equal the expectation value $\expval{H(v,\A)}{\qmstate}$. 
To obtain a density-functional setting, Diener minimized the left-hand side of Eq.~\eqref{eq:G1} over all $\qmstate \mapsto \rho$ and 
transitioned from $\ED(\qmstate,\jarb)$ to a density functional by defining 
\begin{align}
	 \FD(\rho,\jarb) := \inf_{\qmstate \mapsto \rho} \ED(\qmstate,\jarb),
\end{align}
which is equivalent to Eq.~\eqref{eqDienerFun}.
The existence and uniqueness of minimizers of $\FD$ was never investigated by Diener---a possible minimum was simply tacitly assumed. As far as the attempt to obtain a Hohenberg--Kohn theorem is concerned, Diener's proof cannot be completed, as will be demonstrated here based on Proposition~\ref{prop:EITmasterwork}. 
However, we will first make an attempt at providing the best possible presentation of Diener's argument.

A word on notation: If a minimizer of $\FD(\rho,\jarb)$ exists we denote it by $\qmstate_{\mathrm{m}}$ and call it a ``Diener minimizer''. For such a $\qmstate_\mathrm{m}$ we have  $\qmstate_{\mathrm{m}} \mapsto \rho$ and
\begin{equation}
 \FD(\rho,\jarb) = \ED( \qmstate_{\mathrm{m}}, \jarb ).
\end{equation}
That such a minimizer indeed can be guaranteed to exist under certain assumptions is proven in Appendix~\ref{app:proof}. 

Diener has formulated an unorthodox variational principle, Eqs.~(10) and (15) in Ref.~\onlinecite{Diener}, which we restate in the following proposition.
 
\begin{proposition}[Diener's generalized variational principle] \label{prop:Dvarp}
Let $v,\A$ be fixed. 
Diener's functional $\FD$ verifies for any $\rho$, for any $\qmstate\mapsto \rho$, and any 
current density $\jarb$, the inequality
\begin{equation}\label{in1}
\begin{split}
\FD(\rho,\jarb)  +  \int (   \jarb \cdot \A  +  \rho v)\rmd \x   \leq \expval{H(v,\A)}{\qmstate} &   \\
-  \int \tfrac{1}{2} \rho \vert \A- \mathbf{a}(\qmstate,\jarb)\vert^2 \rmd \x  &.     
\end{split}
\end{equation}
Moreover, suppose a minimizer $\qmstate_{\mathrm{m}}$ of $\FD(\rho,\jarb)$ exists, then  
\begin{equation} \label{in1-1}
    \begin{split}
            & \expval{H(v,\A)}{\qmstate_{\mathrm{m}}}   -  \int \tfrac{1}{2} \rho \vert \A- \aD(\qmstate_{\mathrm{m}},\jarb) \vert^2 \rmd\x \\
    & \quad \leq \expval{H(v,\A)}{\qmstate}  -  \int  \tfrac{1}{2} \rho \vert \A- \aD(\qmstate,\jarb) \vert^2 \rmd\x.
    \end{split}
\end{equation}
\end{proposition}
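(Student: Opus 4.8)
The plan is to derive both inequalities directly from the identity in Proposition~\ref{prop:Eq6D}, specifically its rearranged form Eq.~\eqref{eq:G1}, combined with the defining infimum $\FD(\rho,\jarb) = \inf_{\qmstate\mapsto\rho}\ED(\qmstate,\jarb)$. The essential observation I would emphasize first is that once $(\rho,\jarb,v,\A)$ are held fixed, the coupling term $\int(\jarb\cdot\A + \rho v)\rmd\x$ is a constant, independent of which admissible state $\qmstate\mapsto\rho$ is inserted. Every admissible $\qmstate$ reproduces the same density $\rho_\qmstate=\rho$, so this constant is genuinely common to all states under consideration, and likewise the factor $\rho$ appearing in the quadratic term of Eq.~\eqref{eq:G1} may be replaced by the fixed $\rho$.

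For the first inequality \eqref{in1}, I would fix an arbitrary $\qmstate\mapsto\rho$ and invoke the infimum definition to write $\FD(\rho,\jarb)\le\ED(\qmstate,\jarb)$. Adding the common constant $\int(\jarb\cdot\A+\rho v)\rmd\x$ to both sides preserves the inequality, and Eq.~\eqref{eq:G1} evaluated at this $\qmstate$ rewrites the right-hand side as $\expval{H(v,\A)}{\qmstate} - \int \tfrac{1}{2}\rho|\A-\aD(\qmstate,\jarb)|^2\rmd\x$. This is exactly \eqref{in1}.

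For the second inequality \eqref{in1-1}, I would use that a Diener minimizer $\qmstate_{\mathrm{m}}\mapsto\rho$ attains the infimum, so $\ED(\qmstate_{\mathrm{m}},\jarb)=\FD(\rho,\jarb)\le\ED(\qmstate,\jarb)$ for every competitor $\qmstate\mapsto\rho$. Applying Eq.~\eqref{eq:G1} separately to $\qmstate_{\mathrm{m}}$ and to $\qmstate$---both carrying the same density $\rho$---and cancelling the shared constant $\int(\jarb\cdot\A+\rho v)\rmd\x$ from both sides converts the inequality $\ED(\qmstate_{\mathrm{m}},\jarb)\le\ED(\qmstate,\jarb)$ into precisely \eqref{in1-1}.

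I do not expect a serious obstacle here; the content is bookkeeping organized around Eq.~\eqref{eq:G1}. The only point demanding care is the density constraint $\qmstate\mapsto\rho$: both \eqref{in1} and \eqref{in1-1} silently rely on $\rho_\qmstate=\rho$ so that the potential term $\int\rho v\rmd\x$ and the squared-deviation term $\int\tfrac{1}{2}\rho|\A-\aD(\qmstate,\jarb)|^2\rmd\x$ are built from the fixed $\rho$ appearing in $\FD(\rho,\jarb)$. If the density constraint were dropped and states of differing density compared, the coupling term would no longer cancel and neither inequality would follow. I would therefore state the hypothesis $\qmstate\mapsto\rho$ explicitly at the outset and use it in both steps.
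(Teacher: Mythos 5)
Your proposal is correct and follows essentially the same route as the paper: both inequalities are obtained from the identity Eq.~\eqref{eq:G1} together with the infimum definition of $\FD$, with Eq.~\eqref{in1} coming from $\FD(\rho,\jarb)\leq\ED(\qmstate,\jarb)$ plus the common coupling constant, and Eq.~\eqref{in1-1} from applying Eq.~\eqref{eq:G1} to the minimizer $\qmstate_{\mathrm{m}}$ (the paper carries out this rearrangement explicitly via Eq.~\eqref{eq:known-to-CDFTist}, which is the same computation). Your remark that the constraint $\qmstate\mapsto\rho$ is what makes the coupling term cancel and lets $\rho_\qmstate$ be replaced by the fixed $\rho$ is exactly the point the paper relies on implicitly.
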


\begin{proof}
The first inequality, Eq.~\eqref{in1}, follows from minimizing the left-hand side of Eq.~\eqref{eq:G1} over $\qmstate \mapsto \rho$,
 \begin{equation} \label{eq:AL-add-on1}
     \begin{split}
         &\FD(\rho,\jarb) + \int (  \mathbf  \jarb \cdot \A  +  \rho v) \rmd\x \\
        &= \inf_{\qmstate\mapsto \rho} \left\{
	 \ED(\qmstate,\jarb) + \int (  \jarb \cdot \A  +  \rho v) \rmd \x \right\} \\ 
	 &\leq \trace{ H(v,\A)\qmstate } 
	 -  \int \tfrac 1 2 \rho_\qmstate \vert \A - \mathbf a(\qmstate,\jarb) \vert^2 \rmd\x  . 
     \end{split}
 \end{equation}

To obtain Eq.~\eqref{in1-1}, we note that the left-hand side of Eq.~\eqref{eq:AL-add-on1} can be rearranged into
 \begin{equation}
     \begin{split}
     &\ED(\qmstate_\mathrm{m},\jarb) + \int (      \jarb \cdot \A +  \rho v ) \rmd \x \\
     &= \trace{ H_0 \qmstate_\mathrm{m} }
     + \int (   \jarb \cdot \A +  \rho v) \rmd \x 
     -  \int  \tfrac 1 2 \rho \vert \aD(\qmstate_{\mathrm{m}},\jarb) \vert^2 \rmd\x \\
     &= \trace{ H(v, \A) \qmstate_\mathrm{m} }  -  \int \tfrac{1}{2} \rho \vert \A- \aD(\qmstate_{\mathrm{m}},\jarb) \vert^2 \rmd\x .    
     \end{split}
 \end{equation}
Here we used $\FD(\rho,\jarb) =\ED(\qmstate_\mathrm{m},\jarb)$ and Eq.~\eqref{eq:known-to-CDFTist} with $\jarb = \jpara_{\qmstate_\mathrm{m}} + \rho\aD(\qmstate_{\mathrm{m}},\jarb)$.
\end{proof}

In light of Proposition~\ref{prop:Dvarp}, a natural question to ask is the relation between Diener minimizers $\qmstate_\mathrm{m}$ and ground states $\qmstate_0$. We can offer the following answer: 
For a Hamiltonian $H(v,\A)$ where $\A=\aD(\qmstate_{\mathrm{m}},\jarb)$, the Diener minimizer $\qmstate_\mathrm{m}$ is a ground state. 
However, a ground state $\qmstate_0\mapsto (\rho_0,\jtot_0)$ generally does not need to be a minimizer of $\FD(\rho_0,\jtot_0)$ (the proof is given below in Proposition~\ref{prop:amazinEIT}).

\begin{corollary} \label{cor:gs-thing}
Suppose $\qmstate_\mathrm{m} \mapsto \rho_0$ to be a Diener minimizer for $\FD(\rho_0,\jarb)$ and that $H(v,\A)$, with $\A = \aD(\qmstate_{\mathrm{m}},\jarb)$, has $\rho_0$ as a ground-state density. Then $v$ is unique up to a constant. Moreover, for any $\Gamma \mapsto \rho_0$ it holds
(Eq.~(15) in Diener)
\begin{equation} \label{eq:D-var-prin}
\begin{split}
        \expval{H(v,\A)}{\qmstate_{\mathrm{m}}}  \leq \expval{H(v,\A)}{\qmstate}   -  \int \tfrac{1}{2} \rho \vert \A- \aD(\qmstate,\jarb) \vert^2 \rmd\x,
\end{split}
\end{equation}
and as a consequence $\qmstate_\mathrm{m}$ is a ground state for $H(v,\A)$ and $\jarb =  \jpara_{\qmstate_\mathrm{m}} + \rho_0 \A =:\jtot_0$ is the ground-state current density. Further, $\aD(\qmstate_0,\jtot_0) = \aD(\qmstate_{\mathrm{m}},\jtot_0)$ for any other ground state of $H(v,\A)$ with $\qmstate_0\mapsto \rho_0$. 
\end{corollary}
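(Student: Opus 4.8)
The plan is to extract the entire corollary from the generalized variational principle of Proposition~\ref{prop:Dvarp}, specialized to the choice $\A = \aD(\qmstate_\mathrm{m},\jarb)$ that is built into the hypothesis. First I would substitute this $\A$ into the minimizer inequality~\eqref{in1-1}. Because the penalty term $\int \tfrac{1}{2}\rho_0 \vert \A - \aD(\qmstate_\mathrm{m},\jarb)\vert^2 \rmd\x$ then vanishes identically, the left-hand side of~\eqref{in1-1} collapses to $\expval{H(v,\A)}{\qmstate_\mathrm{m}}$ and the inequality becomes exactly the asserted variational inequality~\eqref{eq:D-var-prin}, valid for every $\qmstate \mapsto \rho_0$. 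I would also record the bookkeeping identity $\jarb = \jpara_{\qmstate_\mathrm{m}} + \rho_0\A = \jtot_0$, which is immediate from $\A = (\jarb - \jpara_{\qmstate_\mathrm{m}})/\rho_0$ together with $\qmstate_\mathrm{m}\mapsto\rho_0$.

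The next step is to promote $\qmstate_\mathrm{m}$ from a density-constrained minimizer to a genuine ground state. Since the penalty integrand in~\eqref{eq:D-var-prin} is nonnegative, dropping it yields $\expval{H(v,\A)}{\qmstate_\mathrm{m}} \le \expval{H(v,\A)}{\qmstate}$ for all $\qmstate \mapsto \rho_0$; by itself this only says $\qmstate_\mathrm{m}$ is optimal within the fixed-density fiber. To reach a global statement I would invoke the standing hypothesis that $\rho_0$ is a ground-state density of $H(v,\A)$: choosing a ground state $\qmstate_0 \mapsto \rho_0$ and applying~\eqref{eq:D-var-prin} with $\qmstate = \qmstate_0$ gives $\expval{H(v,\A)}{\qmstate_\mathrm{m}} \le E(v,\A)$, and since $E(v,\A)$ is the global infimum this forces equality, so $\qmstate_\mathrm{m}$ is a ground state. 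Its total current density is then, by definition, $\jpara_{\qmstate_\mathrm{m}} + \rho_0\A = \jtot_0$, establishing that $\jtot_0$ is the ground-state current density.

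The degeneracy statement $\aD(\qmstate_0,\jtot_0) = \aD(\qmstate_\mathrm{m},\jtot_0)$ then follows by squeezing. For any ground state $\qmstate_0\mapsto\rho_0$ one has $\expval{H(v,\A)}{\qmstate_0} = E(v,\A) = \expval{H(v,\A)}{\qmstate_\mathrm{m}}$; inserting $\qmstate = \qmstate_0$ into~\eqref{eq:D-var-prin} leaves only the penalty term, which must therefore vanish, giving $\aD(\qmstate_0,\jarb) = \A$ on the support of $\rho_0$. Since $\A = \aD(\qmstate_\mathrm{m},\jarb)$ and $\jarb = \jtot_0$, this is exactly the claimed equality. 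For the uniqueness of $v$ up to a constant I would note that $\A$ is now held fixed, so we are in the pure B-DFT setting and the claim is precisely the Grayce--Harris/B-DFT Hohenberg--Kohn result quoted in Sec.~\ref{sec:Prel}: for fixed $\A$, a positive ground-state density determines $v$ up to an additive constant.

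I expect the one genuine subtlety to be the promotion step in the second paragraph. The density-constrained inequality drawn from~\eqref{eq:D-var-prin} does \emph{not} by itself make $\qmstate_\mathrm{m}$ a ground state, and it is precisely the hypothesis that $\rho_0$ is $v$-representable as a ground-state density of $H(v,\A)$ that supplies the missing comparison state $\qmstate_0$ and closes the argument. A second, easily overlooked point is that the uniqueness-of-$v$ conclusion inherits the positivity requirement $\rho_0>0$ of the underlying B-DFT Hohenberg--Kohn theorem; I would flag this assumption explicitly rather than let it pass silently, since without it the final clause cannot be invoked.
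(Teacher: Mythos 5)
Your proposal is correct and follows essentially the same route as the paper's proof: specialize Eq.~\eqref{in1-1} to $\A = \aD(\qmstate_\mathrm{m},\jarb)$ so that the penalty term on the minimizer's side vanishes, yielding Eq.~\eqref{eq:D-var-prin}; use the hypothesis that $\rho_0$ is a ground-state density to supply the comparison ground state that promotes the fiber-optimal $\qmstate_\mathrm{m}$ to a global ground state (the paper states this promotion more tersely, but it relies on exactly the observation you make explicit); then insert a ground state $\qmstate_0 \mapsto \rho_0$ back into Eq.~\eqref{eq:D-var-prin} to force $\int \rho_0 \vert \A - \aD(\qmstate_0,\jarb)\vert^2 \rmd\x = 0$; and invoke the B-DFT Hohenberg--Kohn result of Grayce and Harris for the uniqueness of $v$. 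The one place you stop short of the paper is the handling of the set $\{\rho_0 = 0\}$: you conclude $\aD(\qmstate_0,\jtot_0) = \aD(\qmstate_\mathrm{m},\jtot_0)$ only on the support of $\rho_0$ and propose adding $\rho_0 > 0$ as an explicit extra hypothesis for the final Hohenberg--Kohn clause, whereas the paper settles both points at once by citing the unique-continuation property from sets of positive measure (Refs.~\onlinecite{Garrigue2019,LaestadiusBenedicksPenz}), which guarantees $\vert\{\rho_0 = 0\}\vert = 0$ for a ground-state density; hence the equality of effective vector potentials holds almost everywhere and the positivity you would assume is automatic, turning your flagged caveat from an added hypothesis into a consequence of the ground-state assumption already in place.
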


Note that Corollary~\ref{cor:gs-thing} implies that all ground states with $\qmstate_0\mapsto \rho_0$ have the same paramagnetic current densities, since the diamagnetic part always is $\rho_0 \A$. [See also the joint-degeneracy theorem in Ref.~\onlinecite{Capelle2007}.]

\begin{proof}[Proof of Corollary~\ref{cor:gs-thing}]
Firstly, since the vector potential $\A=\aD(\qmstate_{\mathrm{m}},\jarb)$ in $H(v,\A)$ is fixed and $\rho_0$ is by assumption a ground-state density, the Hohenberg--Kohn result of B-DFT~\cite{GRAYCE} gives that $v$ is determined up to a constant. 

The inequality in Eq.~\eqref{eq:D-var-prin} is a direct consequence of Eq.~\eqref{in1-1} with $\aD(\qmstate_{\mathrm{m}},\jarb) = \A$.
Since Eq.~\eqref{eq:D-var-prin} implies the weaker bound $\expval{H(v,\A)}{\qmstate_{\mathrm{m}}} \leq \expval{H(v,\A)}{\qmstate}$
for any $\Gamma\mapsto \rho_0$, it follows that $\qmstate_{\mathrm{m}}$ is a ground state of $H(v,\A)$.
In particular,  Eq.~\eqref{eq:D-var-prin} gives for any ground state of $H(v,\A)$ with $\Gamma_0\mapsto \rho_0$, 
\begin{equation}
\int \rho_0 \vert \A-\mathbf a(\qmstate_0,\jarb)\vert^2  \rmd \x = 0
\end{equation}
and thus
$\rho_0\vert \A- \mathbf a(\qmstate_0,\jarb) \vert^2 =0$ almost everywhere (a.e.). 
By the unique-continuation property from sets of positive measure~\cite{Garrigue2019,LaestadiusBenedicksPenz}, we have $\vert \{\rho_0=0 \} \vert=0$, so $\mathbf a(\qmstate_0,\jarb) =  \A = \aD(\qmstate_{\mathrm{m}},\jarb)$ (a.e.). 
\end{proof}

The main question at this point is, how to guarantee the required $\A=\aD(\qmstate_{\mathrm{m}},\jarb)$. To meet that end, Diener suggested in Ref.~\onlinecite{Diener} to 
choose $\A$ and, for arbitrary $\rho$, find the stationary point, 
\begin{equation}
    \jtot_\mathrm{stat}(\rho,\A) = \mathrm{arg} \, \stat_{\jarb} \left\{  F_\mathrm{D}(\rho,\jarb) + \int  \jarb\cdot \A \rmd\x  \right\},
\end{equation}
where it was implicitly assumed that (i) there is a Diener minimizer $\qmstate_{\mathrm{m}}$ and (ii) $\FD(\rho,\jarb)$ is differentiable with respect to $\jarb$. Under these  assumptions, Diener claimed that this $\jtot_{\mathrm{stat}}$ has the desired property
\begin{equation}
    \aD(\qmstate_\mathrm{m}, \jtot_{\mathrm{stat}}) =\A \quad \text{(up to a gauge)}.
\end{equation}
For a ground state $\qmstate_0 \mapsto \rho_0$ of $H(v,\A)$ one then has by Eq.~\eqref{eq:G1}
\begin{equation} \label{eq:AL91}
    \begin{split}
        \FD(\rho_0,\jtot_\mathrm{stat}(\rho_0,\A)) + \int (  \jtot_{\mathrm{stat}}(\rho_0,\A) \cdot \A+  \rho_0 v)\rmd \x &\\
        \leq  \expval{H(v,\A)}{\qmstate_0},
    \end{split}
\end{equation}
where the left-hand side equals $\expval{H(v,\A)}{\qmstate_{\mathrm{m}}}$.
To sum it up, a stationary variation over $\jarb$ is thought to select the correct $\aD(\qmstate_\mathrm{m}, \jtot_{\mathrm{stat}}) =\A$ while in the next step minimizing over densities gives the ground-state energy because of Eq.~\eqref{eq:AL91},
\begin{equation}
    \begin{split}
       E(v,\A) 
       &= \inf_\rho \stat_\jarb \left\{ \FD(\rho,\jarb) + \int ( \jarb\cdot\A  + \rho v ) \rmd\x \right\}.
    \end{split}
\end{equation}

The attempted proof of Diener for a Hohenberg--Kohn result then relies on the augmented variational principle in Eq.~\eqref{eq:D-var-prin} that moreover has to be a strict inequality for $\qmstate$ not being a Diener minimizer $\qmstate_{\mathrm{m}}$. But since Corollary~\ref{cor:gs-thing} shows that under certain assumptions such Diener minimizers are ground states, an \emph{additional} condition of uniqueness of ground states gives a strict inequality. The usual Hohenberg--Kohn argument by contradiction could then be completed by means of Eq.~\eqref{eq:D-var-prin}.

Furthermore, under the assumption that $\aD(\qmstate_\mathrm{m}, \jtot_{\mathrm{stat}})=\A$ gets selected, there is also a more direct argument available. 
Suppose that $(\rho_0,\jtot_0)$ is the ground-state density pair of two different Hamiltonians with vector potentials $\A$ and $\A'$, respectively. Then if for $\FD(\rho_0,\jarb) + \int  \jarb\cdot \A \rmd \x$ and 
$\FD(\rho_0,\jarb) + \int \jarb\cdot \A' \rmd\x$ Diener's stationary search selects 
$\jtot_0 = \jtot_\mathrm{stat}(\rho_0,\A) = \jtot_\mathrm{stat}(\rho_0,\A')$ that has $\aD(\qmstate_\mathrm{m}, \jtot_0)$ equal to both $\A$ and $\A'$ up to a gauge, 
then the magnetic field is the same for both systems. 
The Hohenberg--Kohn result, i.e., that the scalar potentials also are equal (up to an additive constant), then would follow by the B-DFT result of Grayce and Harris~\cite{GRAYCE}. 

Alas, as a corollary to our main Proposition~\ref{prop:EITmasterwork}, the next proposition shows that Diener's stationary search (as suggested and erroneously proved in Ref.~\onlinecite{Diener}) does \emph{not} select $\aD(\qmstate_\mathrm{m}, \jtot_{\mathrm{stat}})=\A$ up to a gauge. Furthermore, we also have, as a corollary to Proposition~\ref{prop:EITmasterwork}, that ground states are \emph{not} in general minimizers of the Diener functional $\FD$.

\begin{proposition} \label{prop:amazinEIT}
(i) Let $\rho$ and $\A$ be fixed. The Diener optimization
\begin{equation}
\Gstat(\rho,\A) = \stat_\jarb \left\{  F_\mathrm{D}(\rho,\jarb) + \int  \jarb\cdot\A \rmd\x \right\}
\end{equation}
does not in general select $\jtot_\mathrm{stat}$ such that $\mathbf{a}(\qmstate_\mathrm{m}, \jtot_{\mathrm{stat}}) = \A$ (up to a gauge).

(ii) A ground state with the density pair $(\rho,\jtot)$ is not in general a Diener minimizer of $\FD(\rho,\jtot)$.

\end{proposition}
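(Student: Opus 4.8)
The plan is to obtain both parts as direct corollaries of Proposition~\ref{prop:EITmasterwork}, using Corollary~\ref{cor:gs-thing} as the bridge that turns the stationary search in (i) into a statement about the genuine ground-state current. Throughout I fix the particular pair $(\rho,\A)$ furnished by Proposition~\ref{prop:EITmasterwork}, for which no minimizer $\jparaDm$ of $\FD(\rho,\jtot_0)$ satisfies $(\jtot_0-\jparaDm)/\rho=\A$, with $\jtot_0=\jpara_0+\rho\A$ the true total current. Since this non-reconstruction is rooted in the non-convexity of Proposition~\ref{PropGDneqFGH}, the relevant $\rho$ may be taken to be a ground-state density of some $H(v,\A)$, which is what allows Corollary~\ref{cor:gs-thing} to be invoked later.

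For part (ii) I would argue by contradiction. Let $\qmstate_0\mapsto(\rho,\jtot_0)$ be a ground state of $H(v,\A)$, so that $\jpara_{\qmstate_0}=\jpara_0$ and $\jtot_0=\jpara_{\qmstate_0}+\rho\A$. If $\qmstate_0$ were a Diener minimizer of $\FD(\rho,\jtot_0)$, then its paramagnetic current $\jpara_{\qmstate_0}$ would be an admissible minimizing $\jparaDm$ in Eq.~\eqref{eqDienerFun}, and a one-line computation gives $\aD(\qmstate_0,\jtot_0)=(\jtot_0-\jpara_{\qmstate_0})/\rho=\rho\A/\rho=\A$. This contradicts Proposition~\ref{prop:EITmasterwork}, so $\qmstate_0$ is not a Diener minimizer, and $(\rho,\A)$ serves as the explicit counterexample.

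For part (i) I would again argue by contradiction, now using Corollary~\ref{cor:gs-thing} to pin down the selected current. Suppose that for this $(\rho,\A)$ the stationary search delivers a $\jtot_\mathrm{stat}$ together with a Diener minimizer $\qmstate_\mathrm{m}\mapsto\rho$ such that $\aD(\qmstate_\mathrm{m},\jtot_\mathrm{stat})=\A$ (up to a gauge), as Diener's construction requires. Because $\rho$ is a ground-state density and $\A=\aD(\qmstate_\mathrm{m},\jtot_\mathrm{stat})$, Corollary~\ref{cor:gs-thing} applies and forces $\qmstate_\mathrm{m}$ to be a ground state of $H(v,\A)$ with ground-state current $\jtot_\mathrm{stat}=\jpara_{\qmstate_\mathrm{m}}+\rho\A=\jtot_0$ (the last identity because all ground states at density $\rho$ share the same paramagnetic current, so $\jpara_{\qmstate_\mathrm{m}}=\jpara_0$). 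Then $\qmstate_\mathrm{m}$ is a Diener minimizer of $\FD(\rho,\jtot_0)$ whose paramagnetic current obeys $(\jtot_0-\jpara_{\qmstate_\mathrm{m}})/\rho=\A$, once more contradicting Proposition~\ref{prop:EITmasterwork}. Hence the stationary search cannot select such a $\jtot_\mathrm{stat}$, which is exactly the assertion of (i).

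The main obstacle I anticipate is not the logical skeleton, which collapses cleanly onto Proposition~\ref{prop:EITmasterwork}, but the bookkeeping surrounding the stationary search in (i). One must verify that the $(\rho,\A)$ of Proposition~\ref{prop:EITmasterwork} can indeed be taken with $\rho$ a genuine ground-state density so that Corollary~\ref{cor:gs-thing} is applicable, and confirm that the gauge freedom (``up to a gauge'') is already absorbed into Proposition~\ref{prop:EITmasterwork}, whose conclusion rules out $\A$ for \emph{every} minimizer, including all gauge-equivalent ones. A secondary subtlety worth flagging is that Diener's stationary condition is only formal---it presupposes both the existence of a Diener minimizer and differentiability of $\FD$ in $\jarb$; the argument above deliberately sidesteps these analytic hypotheses by reasoning directly from the defining property $\aD(\qmstate_\mathrm{m},\jtot_\mathrm{stat})=\A$ that any successful stationary search would have to produce.
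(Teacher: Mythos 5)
Your reduction of both parts to Proposition~\ref{prop:EITmasterwork} is attractive, and the one-line computation in part (ii) (a ground state that is a Diener minimizer would have $\aD(\qmstate_0,\jtot_0)=(\jtot_0-\jpara_{\qmstate_0})/\rho=\A$) is correct as far as it goes. But both of your arguments hinge on the claim you flag and then wave through: that the counterexample pair $(\rho,\A)$ of Proposition~\ref{prop:EITmasterwork} ``may be taken'' with $\rho$ a genuine ground-state density of some $H(v,\A)$ and $\jtot_0$ the corresponding ground-state current. Proposition~\ref{prop:EITmasterwork} cannot supply this: its proof is purely existential---it assumes correct reconstruction at \emph{every} pair admitting minimax solutions of Eq.~\eqref{eqDinminimax}, derives $\GD=\FGH$ everywhere, and contradicts Proposition~\ref{PropGDneqFGH}---so it gives no control whatsoever over \emph{which} pair fails. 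Worse, the nonconvexity witnesses behind Proposition~\ref{PropGDneqFGH} are pairs $(\rho,\pm\A)$ with small $\A\neq\mathbf{0}$ and $\rho$ a degenerate ground-state density of a \emph{zero-field} Hamiltonian; nothing guarantees that this $\rho$ remains a ground-state density once $\A\neq\mathbf{0}$, which is precisely the hypothesis Corollary~\ref{cor:gs-thing} needs in your part (i) and the existence of the ground state $\qmstate_0$ needs in your part (ii). If you restrict your contradiction to the class of ground-state pairs, the assumption ``every ground state is a Diener minimizer'' only yields $\GD=\FGH$ \emph{on that class}, and to clash with convexity of $\GD$ you need the nonconvexity of $\FGH$ witnessed along a segment of $\A$'s all sharing one and the same ground-state density. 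Supplying such a configuration is not bookkeeping---it is the substance of the paper's proof of (ii): an open-shell atom in a uniform field $B$ with compensating harmonic confinement, so that the effective potential and hence $\rho$ are piecewise constant in $B$, while the $\pm M$ degeneracy gives $\FGH(\rho,\A)=\FGH(\rho,\mathbf{0})-\tfrac{1}{2}\vert MB\vert+\int\tfrac{1}{2}\rho\vert\A\vert^2\rmd\x$, nonconvex on an interval on which ``ground states are Diener minimizers'' would force the convex $\GD$ to coincide with it.

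For part (i) the paper also proceeds differently, and more robustly than your route: instead of localizing at the unknown failing pair and passing through Corollary~\ref{cor:gs-thing} (your identification $\jtot_{\mathrm{stat}}=\jtot_0$ via the shared paramagnetic current of degenerate ground states is only available \emph{after} the unproven representability), it assumes the stationary search selects $\aD(\qmstate_{\mathrm{m}},\jtot_{\mathrm{stat}})=\A$ up to a gauge at an \emph{arbitrary} $(\rho,\A)$, gauge-fixes the minimizer, and computes directly that $\GD(\rho,\A)\geq\Gstat(\rho,\A)\geq\FGH(\rho,\A)$, contradicting Proposition~\ref{PropGDneqFGH}---no ground-state representability, no Corollary~\ref{cor:gs-thing}, and not even Proposition~\ref{prop:EITmasterwork} is needed. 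The fix for your proposal is therefore to invert the skeleton: run the global computation for (i), and construct the explicit piecewise-constant-density system for (ii), since neither part actually follows from Proposition~\ref{prop:EITmasterwork} as stated.
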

\begin{proof}

For (i), we shall establish $\GD  (\rho,\A) \geq \FGH(\rho,\A)$ for arbitrary $(\rho,\A)$, which by 
Proposition~\ref{PropGDneqFGH} is a contradiction. 
If no stationary point exists there is nothing to prove. Therefore assume that the value of $\Gstat(\rho,\A)$ is realized by a $\jtot_{\mathrm{stat}}$ and its contribution via $\FD(\rho,\jtot_{\mathrm{stat}})$, which, in turn, is realized by some state $\qmstate_{\mathrm{m}}$ with paramagnetic current density $\jparaDm$. Because the input to $\FD$ is gauge invariant, any gauge transformed state $\qmstate'_{\mathrm{m}}$, with $\jparaDm' = \jparaDm + \rho \nabla\chi$ and gauge function $\chi$, is an equally valid minimizer. By stipulation, we have $\rho \, \aD(\qmstate_{\mathrm{m}},\jtot_{\mathrm{stat}}) = \jtot_{\mathrm{stat}} - \jparaDm = \rho (\A + \nabla f)$, where $f$ is a gauge function for $\A$. It follows that choosing $\chi = f$ reproduces the external vector potential exactly, i.e., $\aD(\qmstate'_{\mathrm{m}},\jtot_{\mathrm{stat}}) = \A$, and also $\jtot_{\mathrm{stat}} = \jparaDm' + \rho\A$. Hence,
\begin{equation}
 \begin{split}
\GD & (\rho,\A)  \geq \Gstat(\rho,\A) = \stat_{\jarb}  \left\{ F_\mathrm{D}(\rho,\jarb) +  \int   \jarb\cdot \A \rmd\mathbf{r}   \right\} \\
 & =  \expval{H_0}{\qmstate'_{\mathrm{m}}} + \int  \jtot_{\mathrm{stat}}\cdot \A \rmd\x -  \int \tfrac{1}{2} \rho \vert \A \vert^2 \rmd\x 
        \\
 & = \expval{H_0}{\qmstate'_{\mathrm{m}}} + \int \left( \jparaDm'\cdot\A + \tfrac{1}{2} \rho |\A|^2 \right) \rmd\mathbf{r} \geq \FGH(\rho,\A).
 \end{split}
\end{equation}

For part (ii), we demonstrate that the assumption that a ground state also is a Diener minimizer leads to a contradiction. Let $\A(\mathbf{r}) = \frac{1}{2} B \mathbf{e}_z\times\mathbf{r}$ be a vector potential representing a uniform magnetic field along the $z$-axis. Let $v_B(\mathbf{r}) = -Z/|\mathbf{r}| + \frac{1}{2} (\omega_0^2 - \tfrac{1}{4} B^2)(x^2+y^2)$, with $\omega_0\neq 0$ and note that the effective scalar potential $u = v_B + \tfrac{1}{2} |\A|^2 = -Z/|\mathbf{r}| + \frac{1}{2} \omega_0^2 (x^2+y^2)$ is independent of $B$. The Hamiltonian $H(v_B,\A) = H(v_0,\mathbf{0}) + \frac{1}{2} B L_z$ has cylindrical symmetry and the eigenstates therefore have quantized angular momentum component $L_z = -\rmi\sum_j [\mathbf{r}_j\times\nabla_j]_z$. Due to the quantization, the paramagnetic term becomes a trivial shift and the ground state is piecewise constant as a function of $B$, with jumps corresponding to level crossings. Consequently, the ground-state density $\rho$ is also piecewise constant in $B$. For values of $Z$ that correspond to an open-shell atom (e.g., a carbon atom with $Z=6$ and six electrons), there is a ground state $\qmstate_{-M}$ for $B\geq 0$ with $\expval{L_z}{\qmstate_{-M}} = -M$. For $B\leq 0$, the ground state is $\qmstate_{+M} = \qmstate_{-M}^*$ with the same density, $\qmstate_{\pm M} \mapsto \rho$, but $\expval{L_z}{\qmstate_{+M}} = +M$. For sufficiently small $|B|$, the Grayce--Harris functional is given by
\begin{equation}
    \FGH(\rho,\A) = \FGH(\rho,\mathbf{0}) - \frac{1}{2} |MB| + \int \tfrac{1}{2} \rho \vert \A \vert^2 \rmd\mathbf{r},
\end{equation}
which is nonconvex in $B$ because of the term $-\tfrac{1}{2}|MB|$ and is independent of the sign of $B$.

For $B>0$, the total current density is given by $\jtot_{+} = \jpara_{\qmstate_{-M}} + \rho \A$ and for $B<0$ it is $\jtot_{-} = \jpara_{\qmstate_{+M}} + \rho \A = -\jtot_{+}$. By stipulation, the ground states $\qmstate_{\pm M}$ for all sufficiently small $|B|$ are also a minimizers of $\FD(\rho,\jtot_{\pm})$. Then
\begin{equation}
  \begin{split}
    \GD(\rho,\A) & \geq \FD(\rho, \jtot_{\pm}) + \int \jtot_{\pm}\cdot\A \rmd \mathbf{r}
          \\
    & = \expval{H_0}{\qmstate_{\mp M}} - \int (\tfrac{1}{2} \rho \vert \A \vert^2 + \jtot_{\pm}\cdot\A) \rmd \mathbf{r}
       \\
     & = \FGH(\rho,\A).
  \end{split}
\end{equation}
Combined with the generic fact $\FGH(\rho,\A) \geq \GD(\rho,\A)$, we now have $\FGH(\rho,\A) = \GD(\rho,\A)$ for a whole interval of small $|B|$. However, $\GD(\rho,\A)$ is convex in $\A$ (and therefore also in $B$) and therefore cannot equal the nonconvex $\FGH(\rho,\A)$ on an interval of small $|B|$. This contradiction completes the proof.
\end{proof}

In the previous section, Propositions~\ref{PropGDneqFGH} and~\ref{prop:EITmasterwork} established via a reinterpretation as a minimax principle that Diener's approach cannot work, since it attempts to derive claims that are false. Proposition~\ref{prop:amazinEIT} above further shows, in the terminology of Ref.~\onlinecite{Diener}, that the central steps in Diener's reasoning towards a Hohenberg--Kohn-like result fail. Our results here are thus definitive and go further than previous critiques, which identified an unfounded strict inequality, a self-consistency condition that would require further analysis, and a variational collapse for specific types of ground state densities~\cite{Tellgren2012,Laestadius2014}.

It may also be instructive to note a case where Diener's approach does go through---albeit under extreme restrictions. When only $\qmstate$ with vanishing $\jpara_\qmstate$ are allowed (or when only real valued states are allowed), we can choose a $H(v,\A)$ with ground state $\Gamma_0$ and densities $\rho_0$ and $\jtot_0 = \mathbf{0}+\rho_0 \A$. In this case, the densities trivially determine the external vector potential $\A=\jtot_0/\rho_0$. However, the correct $\jarb=\jtot_0=\rho_0\A$ is also recovered from the variational principle
\begin{equation} \label{eq:37}
\begin{split}
&\Gstat(\rho_0,\A)= \stat_{\jarb} \left\{ \FD(\rho_0,\jarb) + \int  \jarb\cdot \A \rmd\x  \right\} \\
&  = \stat_{\jarb} \left\{ \inf_{\qmstate\mapsto \rho_0,\jpara_\qmstate=\mathbf{0}} \expval{H_0}{\qmstate} -  \int \frac{\vert \jarb \vert^2}{2 \rho_0} \rmd \x + \int  \jarb\cdot \A \rmd\x     \right\}.
\end{split}
\end{equation}
This restrictive case works because the coupling term $\jarb\cdot\jpara_\qmstate/\rho_0$ is absent and $\inf_{\qmstate\mapsto \rho_0,\jpara_\qmstate=\mathbf{0}} \expval{H_0}{\qmstate}$ is independent of $\jarb$, such that Eq.~\eqref{eq:37} leads to (at $\jarb =  \jtot_{\mathrm{stat}}$)
\begin{equation}
- \frac{\jtot_{\mathrm{stat}}}{\rho_0} + \A = \mathbf 0 \iff \mathbf{a}(\qmstate_{\mathrm{m}},\jtot_{\mathrm{stat}}) = \frac{\jtot_{\mathrm{stat}}}{\rho_0} = \A .
\end{equation}

Before concluding, we also take this opportunity to correct an incorrect claim in a previous publication of one of the authors, namely Proposition~8 in Ref.~\onlinecite{Laestadius2014}, which essentially misconstrued the contradiction reached in a reductio ad absurdum proof as a problem with the proof itself. Specifically, Proposition~8 in Ref.~\onlinecite{Laestadius2014} considers two ground state energies $E = E(v,\A)$ and $E' = E(v',\A')$. Leaving aside for the sake of the argument all other problems with Diener's proof idea, one then reaches the contradiction $E+E'<E+E'$, but contrary to Proposition~8 in Ref.~\onlinecite{Laestadius2014} this is not an additional flaw of the attempted proof.

\section{Conclusions}

We have revisited Diener's attempted construction of a density-functional theory featuring the gauge-invariant, total current density. The underlying crucial assumptions have been clarified by a reformulation in terms of a maximin principle. As Diener's construction employs a nonstandard variational principle, it avoids some of the usual difficulties with the total current density as a variational parameter. Nonetheless, we have shown here that his attempted construction fails to establish a current-density-functional theory. Since the correct ground-state energy cannot be obtained within this framework. Moreover, the attempt to establish a Hohenberg--Kohn mapping for total current densities suffers from irreparable gaps in the reasoning. We have shown that there must be counterexamples for which the procedure does not retrieve the correct external vector potential from a given current density.

On the other hand, in broad outline, Diener's formulation shares notable features with the recently proposed Maxwell--Schr\"odinger DFT (MDFT)~\cite{TellgrenSolo2018}, though details differ on crucial points. Diener introduces an effective vector potential, which is equivalent to a total current density, while MDFT takes the induced magnetic field into account that is equivalent to a vector potential or a current density and the total current density then arises naturally as a basic variable. In both cases, the total current density is a variational parameter that is varied independently from the wave function and the external potentials. Moreover, in Diener's approach this variational parameter originates from a nonstandard, and unfortunately mistaken, re-expression of the Schr\"odinger variational principle. In MDFT, it comes from a modified energy minimization principle that simply adds the energy of the induced magnetic field. One can thus view MDFT as a proof of concept for deriving density-functional theories of the total current from modified variational principles.
The very same considerations incorporating a fully quantized electromagnetic field lead to quantum-electrodynamical DFT (QEDFT)~\cite{Ruggenthaler2015}. Such extended density-functional theories form a physically better motivated and theoretically more sound way for a density-functional framework including the total current density.

\section*{Acknowledgements}

A.~L.\ acknowledges support from the Research
Council of Norway (RCN) under CoE Grant Nos. 287906
and 262695 (Hylleraas Centre for Quantum Molecular
Sciences). 
E.~I.~T.\ acknowledges support from RCN under Grant No.~287950 and ERC-STG-2014 Grant No.~639508. 
M.~P.\ acknowledges support by the Erwin Schrödinger Fellowship J
4107-N27 of the FWF (Austrian Science Fund).
The authors are thankful to L.~Garrigue and M.~A.~Csirik for comments and suggestions that greatly improved the manuscript and moreover 
acknowledge the support of the Centre for Advanced Study (CAS) in Oslo, Norway, which funded and hosted the workshop ``Do Electron Current Densities Determine All There Is to Know?'' during 2018.

 \appendix

\section{Convex functions}
\label{appConvexStuff}

We here review basic definitions from convex analysis. A subset $S$ of a vector space is said to be \emph{convex} if $x_1,x_2 \in S$ implies $\lambda x_1 + (1-\lambda) x_2 \in S$, for all $0 \leq \lambda \leq 1$. A function $f(x)$ is said to be \emph{convex} if it is defined on a convex domain and linear interpolation always yields an overestimate,
\begin{equation}
    f(\lambda x_1 + (1-\lambda) x_2) \leq \lambda f(x_1) + (1-\lambda) f(x_2),
\end{equation}
for all $0 \leq \lambda \leq 1$. A function is said to be \emph{concave} if the reverse inequalities hold.

A function of two variables $f(x,y)$ can be convex in each of the arguments separately,
\begin{equation}
    \begin{split}
        f(\lambda x_1 + (1-\lambda) x_2,y) & \leq \lambda f(x_1,y) + (1-\lambda) f(x_2,y),
        \\
    f(x,\lambda y_1 + (1-\lambda) y_2) & \leq \lambda f(x,y_1) + (1-\lambda) f(x,y_2).
    \end{split}
\end{equation}
A stronger property is \emph{joint convexity} in both arguments, 
\begin{equation}
 \begin{split}
    f(\lambda x_1 & + (1-\lambda) x_2, \lambda y_1 + (1-\lambda) y_2) 
         \\
    & \leq \lambda f(x_1,y_1) + (1-\lambda) f(x_2,y_2).
 \end{split}
\end{equation}

The function
\begin{equation}
    g(z) = \inf_{x} \left( \langle x, z \rangle + f(x) \right),
\end{equation}
where $\langle x, z \rangle$ denotes a scalar product (or bilinear pairing) is concave by construction. Changing the infimum to a supremum yields a convex function. This fact is useful in density-functional theory, since the ground-state energy of standard DFT,
\begin{equation}
    E(v) = \inf_{\rho} \left( \int \rho v \rmd \mathbf{r} + F(\rho) \right),
\end{equation}
is of this form. Convexity properties in B-DFT and CDFT are reviewed in Ref.~\onlinecite{TellgrenSolo2018}.

\section{Proof of existence of Diener minimizers}
\label{app:proof}

A Previous work pointed out the issue of variational collapse, i.e., for some $\rho$ we have $\FD(\rho,\jarb) = -\infty$~\cite{Tellgren2012}. This is a serious, but not totally decisive, problem for Diener's approach. Here we show how it can be circumvented at the cost of breaking gauge invariance by introducing a restriction on the kinetic energy density. Under such conditions, we prove the existence of a minimizer of $\FD(\rho,\jarb)$. 

First some preparations follow. We limit ourselves to pure states $\qmstate = \vert\psi\rangle\langle \psi\vert$.
Define the trial set of physical wave functions by 
 \begin{align}
	 \mathcal W_N := \Big\{ \psi \in H_{}^1(\R^{3N},\mathbb C) : \int_{\mathbb R^{3N}} \vert \psi \vert^2 \rmd \x_1 \cdots \rmd\x_N= 1 \Big\}.
 \end{align}
 The set $\mathcal W_N$ is an intuitive choice for the wave functions being minimized over in the pure-state Diener functional. We may thus take
 \begin{equation}
 \FD(\rho,\jarb) = \inf_{\psi\in \mathcal W_N, \psi \mapsto \rho } \ED(\psi,\jarb) 
 \end{equation}
 as the Diener functional in a more detailed setting. (Also note the slight abuse of notation, as we write $\ED(\psi,\jarb)$ instead of the consistent choice $\ED(\vert \psi\rangle\langle \psi\vert,\jarb)$.)  
 
 The functional $\FD$ can be defined on  the space (see Ref.~\onlinecite{LAESTADIUS_JCTC15_4003})
 \begin{equation}
     \begin{split}
         \mathcal R_N := \Big\{ (\rho,\jarb) : \sqrt{\rho} \in H^1(\R^3,\R), \jarb \in L^1\cap L^{3/2}(\R^3,\R^3),& \\
	 \int_{\mathbb R^3} \rho \rmd \x = N, \int_{\mathbb R^3} \frac{\vert \jarb  \vert^2}{\rho} \rmd \x < + \infty \Big\}&.
     \end{split}
 \end{equation}
Then $\psi \in \mathcal W_N$ implies $(\rho_{\psi},\jpara_{\psi}) \in \mathcal R_N$. 
By a feature called \emph{compatibility}~\cite{LAESTADIUS_JCTC15_4003}, the $\jpara$ and $\jtot$ belong to the same space $\mathcal R_N$ and $\jarb$ is an arbitrary current in that space.
 
 For technical reasons, we introduce a further restricted wave-function space. 
 Let the kinetic energy density of a state $\psi$, $\tau_\psi: \R^3 \to [0,+\infty]$, be given by 
 \begin{equation}
 \tau_\psi(\x_1) = \int_{\mathbb R^{3(N-1)}} \tfrac{1}{2} \vert \nabla \psi\vert^2 \rmd \x_2 \cdots \rmd\x_N  .
 \end{equation}
 For $g\geq 0$ a fixed integrable function with $\int g \rmd \x =C$, $C>0$, set
 \begin{align}
	 \mathcal W_N^C := \big\{ \psi \in \mathcal W_N : \tau_\psi\leq g, \tau_\psi/\rho_\psi \in L^\infty  \big\} 	 .
 \end{align}
Note that the kinetic energy $\int \tau_\psi \rmd\x$ is bounded above by $C$ for all $\psi \in \mathcal W_N^C$, so we get a kinetic-energy cutoff for all such wave functions. 
The constraint $\tau_\psi/\rho_\psi \in L^\infty$ is to guarantee $\jpara_\psi/\rho_\psi \in L^\infty$, a property used later. 
In the mathematical literature of CDFT $\psi$ is typically assumed to have finite kinetic energy (here we go further and have even imposed a pointwise bound on the kinetic-energy density $\tau_\psi$).

The following lemma is an adaptation of Proposition~5 in Ref.~\onlinecite{Laestadius2014}. 

\begin{lemma}\label{lemma:min}
Fix $\rho$ and $\jarb$.  
Suppose weak convergence of $\psi^{n}$ to $\psi_\mathrm{m}$ in $H^1$ and that $\rho_{\psi^n} = \rho$ and $\jarb= \jpara_{\psi^n} +  \mathbf a(\psi^n,\jarb) \rho$ hold for all $n$. Then $\rho_{\psi_\mathrm{m}} = \rho$ and, for a subsequence $\{n_k\}$, $\jpara_{\psi_\mathrm{m}}$ is the weak $L^1$ limit of the $\{ \psi^{n_k}\}$'s paramagnetic current densities as well as their pointwise limit (a.e.),
\begin{equation}\label{eq:lem-jp-limits}
\jpara_{\psi^{n_k}} \rightharpoonup  \jpara_{\psi_\mathrm{m}} \quad (\mathrm{in}\,\,L^1),
\quad \jpara_{\psi^{n_k}} \to \jpara_{\psi_\mathrm{m}} \quad \mathrm{(a.e.)}
\end{equation}
\end{lemma}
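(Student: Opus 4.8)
The plan is to read off the three conclusions from the weak $H^1$ convergence by combining Rellich--Kondrachov compactness, a tightness argument that exploits the fact that $\rho$ is fixed, and the Dunford--Pettis criterion for weak $L^1$ compactness, with the pointwise cutoff $\tau_{\psi^n}\le g$ supplying all the uniform bounds. First I would settle $\rho_{\psi_\mathrm m}=\rho$. Weak convergence in $H^1(\R^{3N})$ and the Rellich--Kondrachov theorem give strong $L^2$ convergence on every ball and, after passing to a subsequence, $\psi^n\to\psi_\mathrm m$ a.e.\ on $\R^{3N}$. Since $\rho_{\psi^n}=\rho$ is the \emph{same} function for every $n$ and, by the permutation symmetry of $\ab{\psi^n}^2$ for (antisymmetric) electronic states, all one-particle marginals equal $\rho/N$, the probability that any of the $N$ particles lies outside a ball $B_R$ is bounded by $\int_{\ab{\x}>R}\rho\rmd\x$, which tends to $0$ as $R\to\infty$ uniformly in $n$. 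This tightness together with the strong $L^2$ convergence on $(B_R)^N$ forces $\|\psi_\mathrm m\|_{L^2}=1$; since weak convergence plus convergence of norms yields strong convergence in a Hilbert space, $\psi^n\to\psi_\mathrm m$ strongly in $L^2(\R^{3N})$. Fatou applied to $\ab{\psi^n}^2$ then gives $\rho_{\psi_\mathrm m}\le\rho$ a.e., while $\int\rho_{\psi_\mathrm m}\rmd\x=N\|\psi_\mathrm m\|^2=N=\int\rho\rmd\x$; a pointwise inequality with equal integrals forces $\rho_{\psi_\mathrm m}=\rho$ a.e.

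For the weak $L^1$ convergence I would first produce an $n$-independent $L^1$ majorant for the currents. Cauchy--Schwarz in the integrated-out coordinates gives $\ab{\jpara_{\psi^n}(\x_1)}\le N(\int\ab{\psi^n}^2\rmd\x_2\cdots\rmd\x_N)^{1/2}(\int\ab{\nabla_1\psi^n}^2\rmd\x_2\cdots\rmd\x_N)^{1/2}\le\sqrt{2N\rho(\x_1)\,\tau_{\psi^n}(\x_1)}$, and the cutoff $\tau_{\psi^n}\le g$ yields $\ab{\jpara_{\psi^n}}\le h:=\sqrt{2N\rho g}$, where $\int h\rmd\x\le\sqrt{2N}(\int\rho\rmd\x)^{1/2}(\int g\rmd\x)^{1/2}<\infty$ by a further Cauchy--Schwarz, so $h\in L^1$. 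Domination by the fixed $L^1$ function $h$ makes $\{\jpara_{\psi^n}\}$ uniformly integrable and tight, so Dunford--Pettis furnishes a subsequence converging weakly in $L^1$ to some $\jarb_\infty$. To identify $\jarb_\infty=\jpara_{\psi_\mathrm m}$ I would test against $\boldsymbol\phi\in C_c^\infty(\R^3,\R^3)$ and write $\int\jpara_{\psi^n}\cdot\boldsymbol\phi\rmd\x=N\,\mathrm{Im}\langle\boldsymbol\phi\,\psi^n,\nabla_1\psi^n\rangle$: because $\boldsymbol\phi\,\psi^n\to\boldsymbol\phi\,\psi_\mathrm m$ strongly in $L^2$ while $\nabla_1\psi^n\rightharpoonup\nabla_1\psi_\mathrm m$ weakly in $L^2$, the strong-times-weak pairing converges to $\int\jpara_{\psi_\mathrm m}\cdot\boldsymbol\phi\rmd\x$, whence $\jarb_\infty=\jpara_{\psi_\mathrm m}$.

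The a.e.\ statement I would reduce to slices. Global strong $L^2$ convergence and Fubini give, along a further subsequence and for a.e.\ fixed $\x_1$, strong $L^2(\R^{3(N-1)})$ convergence of the slice $\psi^n(\x_1,\cdot)\to\psi_\mathrm m(\x_1,\cdot)$, while the cutoff bounds the transverse gradient slice uniformly, $\int\ab{\nabla_1\psi^n(\x_1,\cdot)}^2\rmd\x_2\cdots\rmd\x_N\le 2g(\x_1)<\infty$. Granting that for a.e.\ $\x_1$ these bounded gradient slices converge weakly to $\nabla_1\psi_\mathrm m(\x_1,\cdot)$, the strong-times-weak pairing again gives $\jpara_{\psi^n}(\x_1)=N\,\mathrm{Im}\langle\psi^n(\x_1,\cdot),\nabla_1\psi^n(\x_1,\cdot)\rangle\to\jpara_{\psi_\mathrm m}(\x_1)$. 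The main obstacle is exactly this slice-wise weak convergence of the gradients: the relevant derivative $\nabla_1$ is \emph{transverse} to the slice, so the slice $\psi^n(\x_1,\cdot)$ does not determine $\nabla_1\psi^n(\x_1,\cdot)$, and global weak $L^2$ convergence of $\nabla_1\psi^n$ does not automatically descend to a.e.\ weak convergence of the slices. I would close this gap by showing that for a.e.\ $\x_1$ every weak limit point of the (uniformly $L^2$-bounded) slices $\nabla_1\psi^n(\x_1,\cdot)$ equals $\nabla_1\psi_\mathrm m(\x_1,\cdot)$, identifying the limit through the integrated identity tested against product functions $\alpha(\x_1)\boldsymbol\beta(\x_2,\ldots,\x_N)$ and a diagonal extraction over a countable dense family, so that every subsequence has a further subsequence with the same limit---the argument adapted from Proposition~5 of Ref.~\onlinecite{Laestadius2014}.
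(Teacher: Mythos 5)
Your first two steps are sound. The tightness-plus-Rellich argument (valid once one restricts to permutation-symmetric $\vert\psi\vert^2$, as you note) re-derives exactly what the paper imports as a black box from Lieb's Theorem~3.3, namely subsequential strong $L^2$ convergence and $\rho_{\psi_\mathrm{m}}=\rho$. Your weak-$L^1$ step is also correct, though the Dunford--Pettis detour costs generality: the majorant $\vert\jpara_{\psi^n}\vert\le\sqrt{2N\rho g}$ uses the cutoff $\tau_{\psi^n}\le g$, which is \emph{not} among the lemma's hypotheses (the lemma is stated for arbitrary weakly $H^1$-convergent $\psi^n$ with fixed density; note also that the condition $\jarb=\jpara_{\psi^n}+\mathbf{a}(\psi^n,\jarb)\rho$ is an identity by Eq.~\eqref{eq:aD} and restricts nothing). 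The paper instead splits $\int\phi\,(J(\psi^{n})-J(\psi_\mathrm{m}))\rmd\x_1$, for arbitrary $\phi\in L^\infty$, into a term killed by strong $L^2$ convergence of $\psi^n$ against the bounded gradients and a term killed by weak $L^2$ convergence of $\nabla\psi^n$ against the fixed $\phi\,\psi_\mathrm{m}$; this yields weak $L^1$ convergence directly, with no uniform-integrability input and no cutoff.

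The genuine gap is the a.e.\ clause, and your proposed repair cannot close it. Product tests $\alpha(\x_1)\boldsymbol{\beta}$ over a countable family control only the averages $\int\alpha(\x_1)\,u^{\beta}_n(\x_1)\rmd\x_1$, where $u^{\beta}_n(\x_1)=\langle\boldsymbol{\beta},\nabla_1(\psi^n-\psi_\mathrm{m})(\x_1,\cdot)\rangle$; that is, they give weak convergence $u^{\beta}_n\rightharpoonup 0$ as functions of $\x_1$. A bounded weakly-null sequence need not have \emph{any} a.e.-convergent subsequence, and the subsequence-of-subsequence device is powerless here, because the pointwise values of weak limit points along $\x_1$-dependent subsequences are not determined by countably many integrated identities. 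Worse, the step you ``grant'' is false under the lemma's stated hypotheses: take $N=1$ and $\psi^n=\sqrt{\rho}\,e^{\rmi\phi_n}$ with $\phi_n(\x)=n^{-1}\sin(nx_1)\chi(\x)$, $\chi\in C_c^\infty$. Then $\rho_{\psi^n}=\rho$ is fixed, $\psi^n\to\sqrt{\rho}$ strongly in $L^2$ and weakly in $H^1$, and for suitable $\rho$ (bounded $\vert\nabla\sqrt{\rho}\vert^2/\rho$) even $\tau_{\psi^n}\le g$ with $g\in L^1$ and $\tau_{\psi^n}/\rho_{\psi^n}\in L^\infty$ hold; yet $\jpara_{\psi^n}=\rho\nabla\phi_n=\rho\,\chi\cos(nx_1)\,\mathbf{e}_1+O(n^{-1})$ admits \emph{no} a.e.-convergent subsequence: if $\cos(n_kx_1)\to f$ a.e.\ on a set of positive measure, dominated convergence applied to $(\cos(n_kx_1)-\cos(n_{k+1}x_1))^2$ contradicts the Riemann--Lebesgue lemma. (The weak-$L^1$ limit is of course $\mathbf{0}=\jpara_{\sqrt{\rho}}$, consistent with the first clause.) So no soft argument of the kind you outline can deliver the pointwise part of Eq.~\eqref{eq:lem-jp-limits}; it would require compactness beyond weak $H^1$ convergence with fixed $\rho$, such as strong $L^2$ convergence of the gradients. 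You should be aware that the paper's own proof also establishes explicitly only the weak pairing convergence against $L^\infty$ tests, so your honest flagging of the transverse-slice obstacle identifies precisely the point where the a.e.\ claim is not secured by either argument.
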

\begin{proof}
By Theorem~3.3 in Lieb~\cite{Lieb1983} we know that for a subsequence, indexed by $n_k$, $\psi^{n_k} \to \psi_\mathrm{m}$ in $L^2$ and that $\rho_{\psi_\mathrm{m}} = \rho$. 
Next we demonstrate that $\jpara_{\psi^{n_k}} \rightharpoonup  \jpara_{\psi_\mathrm{m}}$ weakly in $L^1$ and pointwise (a.e.). Set
$J(\psi)$ to be any component of
\begin{equation}
 \int_{\mathbb R^{3(N-1)}} \overline{\psi} \nabla \psi \rmd \x_2 \cdots \rmd \x_N
\end{equation}
and note, for $\phi$ being the characteristic function of any measurable set $M\subset \mathbb R^3$, that ($j$ indexing the component)
\begin{equation}
    \begin{split}
        &\int_{\mathbb R^3}  \phi\, ( J(\psi^n) - J(\psi_\mathrm{m}) ) \rmd\x_1  \\
    & = \int_{\mathbb{R}^3} \phi  \int_{\mathbb R^{3(N-1)}} 
    [(\overline{\psi}^n \nabla \psi^n - \overline{\psi}_\mathrm{m} \nabla \psi_\mathrm{m} )]_j \rmd\x_2 \cdots \rmd \x_N  \\
    &=\int_{\mathbb R^{3N} } \phi\,\Big[ ( \overline{\psi}^n -\overline{\psi}_\mathrm{m}) \nabla \psi^n
    - \overline{\psi}_\mathrm{m} \nabla (\psi_\mathrm{m} - \psi^n) \Big]_j  \rmd \x_1 \cdots \rmd \x_N  \\
    &= \int_{\mathbb R^{3N} } \phi\, (\overline{\psi}^n -\overline{\psi}_\mathrm{m}) [\nabla \psi^n ]_j\rmd \x_1 \cdots \rmd \x_N \\
    &\quad - \int_{\mathbb R^{3N} } \phi \,\overline{\psi}_\mathrm{m} [ \nabla (\psi_\mathrm{m} - \psi^n) ]_j\rmd \x_1 \cdots \rmd \x_N \\
    &=: A_j(n) - B_j(n)   .
    \end{split}
\end{equation}
Since $ \phi \nabla \psi^n  \in L^2$ we have $A_j(n_k)\to 0$ as $k\to\infty$ by norm convergence of $\{ \psi^n\}$, and from $ \phi \psi\in L^2$ we obtain $B_j(n_k) \to 0$ by weak convergence in $L^2$ of $\{ \nabla \psi^n \}$. 
We can also use this argument (take $\phi\in L^\infty$) to obtain weak convergence in $L^1$ of (a subsequence of) $\jpara_{\psi^n}$ to $\jpara_{\psi_\mathrm{m}}$.
\end{proof}

Now, define the kinetic-cutoff Diener functional $\FD^C(\rho,\jarb)$ by
\begin{equation}\label{eq:appD}
    \FD^C(\rho,\jarb) =  \inf_{\psi\in \mathcal W_N^\mathrm{C}, \rho_\psi = \rho } \ED(\psi,\jarb).
\end{equation}
For this version of the Diener functional we can finally establish existence of Diener minimizers. 

\begin{proposition} \label{prop:min}
Let $(\rho,\jarb) \in \mathcal R_N \cap \{(\rho,\jarb) : \jarb/\rho \in L^\infty \}$. 
There exists $\psi_\mathrm{m} \in \mathcal W_N^C$ such that $\FD^C(\rho,\jarb) = \ED(\psi_\mathrm{m},\jarb)$ and $\psi_\mathrm{m} \mapsto \rho$, i.e., the infimum in Eq.~\eqref{eq:appD} is a minimum. 
\end{proposition}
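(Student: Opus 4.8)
The plan is to prove existence by the direct method of the calculus of variations, exploiting the kinetic-energy-density cutoff $\tau_\psi\le g$ in two distinct ways: it renders minimizing sequences bounded in $H^1$, and, because $g\in L^1$, it provides an integrable majorant that upgrades the weak convergence of the paramagnetic currents supplied by Lemma~\ref{lemma:min} into genuine convergence of the relevant integrals. Before starting I would record two pointwise facts. Cauchy--Schwarz applied to the definition of $\jpara_\psi$ in Eq.~\eqref{eq:dens-para} gives $|\jpara_\psi|^2/\rho_\psi\le 2\tau_\psi$ almost everywhere, so for any $\psi\in\mathcal{W}_N^C$ with $\rho_\psi=\rho$ the cutoff yields $|\jpara_\psi|^2/\rho\le 2g$. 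Since $(\rho,\jarb)\in\mathcal{R}_N$ guarantees $|\jarb|^2/\rho\in L^1$, the subtracted integral in $\ED$ is then controlled by the fixed $L^1$ function $2|\jarb|^2/\rho+4g$. As $H_0\ge0$ (a sum of the nonnegative kinetic operator and $W\ge0$), this gives $\ED(\psi,\jarb)\ge-\int(|\jarb|^2/\rho+2g)\rmd\x>-\infty$, so $\FD^C(\rho,\jarb)$ is finite and a minimizing sequence $\{\psi^n\}\subset\mathcal{W}_N^C$ with $\rho_{\psi^n}=\rho$ exists.

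Next I would extract a limit. The cutoff gives $\int\tau_{\psi^n}\rmd\x\le\int g\rmd\x=C$ and $\|\psi^n\|_{L^2}=1$, so $\{\psi^n\}$ is bounded in $H^1$; passing to a subsequence, $\psi^n\rightharpoonup\psi_\mathrm{m}$ weakly in $H^1$. Lemma~\ref{lemma:min} then applies, since its only standing hypotheses (weak $H^1$ convergence and $\rho_{\psi^n}=\rho$) hold, and yields along a further subsequence $\{n_k\}$ that $\rho_{\psi_\mathrm{m}}=\rho$, that $\psi^{n_k}\to\psi_\mathrm{m}$ strongly in $L^2$ (hence $\|\psi_\mathrm{m}\|_{L^2}=1$, so no mass escapes), and that $\jpara_{\psi^{n_k}}\to\jpara_{\psi_\mathrm{m}}$ both weakly in $L^1$ and pointwise a.e. To place $\psi_\mathrm{m}$ in $\mathcal{W}_N^C$ I would verify the cutoff by testing against nonnegative $\phi$: weak lower semicontinuity of the $\phi$-weighted kinetic energy gives $\int\phi\,\tau_{\psi_\mathrm{m}}\rmd\x\le\liminf_k\int\phi\,\tau_{\psi^{n_k}}\rmd\x\le\int\phi\,g\rmd\x$, and since $\phi\ge0$ was arbitrary this forces $\tau_{\psi_\mathrm{m}}\le g$ a.e.

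The delicate step, and the one I expect to be the main obstacle, is the variational inequality $\ED(\psi_\mathrm{m},\jarb)\le\liminf_k\ED(\psi^{n_k},\jarb)$. Writing $\ED(\psi,\jarb)=\langle\psi|H_0|\psi\rangle-T(\psi)$ with $T(\psi)=\int|\jarb-\jpara_\psi|^2/(2\rho)\rmd\x$, the first term is weakly lower semicontinuous: the kinetic part because it is convex and strongly continuous, and the repulsion because $W\ge0$ allows Fatou's lemma with $\psi^{n_k}\to\psi_\mathrm{m}$ a.e. The subtracted term $T$, however, enters with the opposite sign, so its own lower semicontinuity would point the wrong way. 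The resolution is to upgrade $T$ to genuine convergence, and this is exactly where the integrable cutoff pays off a second time: the a.e.\ convergence of $\jpara_{\psi^{n_k}}$ together with the domination $|\jarb-\jpara_{\psi^{n_k}}|^2/\rho\le 2|\jarb|^2/\rho+4g\in L^1$ gives $T(\psi^{n_k})\to T(\psi_\mathrm{m})$ by dominated convergence. Combining the two parts yields $\liminf_k\ED(\psi^{n_k},\jarb)\ge\langle\psi_\mathrm{m}|H_0|\psi_\mathrm{m}\rangle-T(\psi_\mathrm{m})=\ED(\psi_\mathrm{m},\jarb)$, and since $\psi_\mathrm{m}\in\mathcal{W}_N^C$ with $\rho_{\psi_\mathrm{m}}=\rho$ the reverse inequality holds by definition of the infimum in Eq.~\eqref{eq:appD}; hence $\psi_\mathrm{m}$ is a minimizer.

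A remaining subtlety that requires separate care is the ratio constraint $\tau_{\psi_\mathrm{m}}/\rho\in L^\infty$ built into $\mathcal{W}_N^C$. Unlike the pointwise cutoff $\tau_{\psi_\mathrm{m}}\le g$, this does not transfer to a weak limit for free, because the essential suprema of $\tau_{\psi^n}/\rho$ need not be uniformly bounded along the minimizing sequence. One clean way to close this gap is to choose the cutoff $g$ so that $g/\rho\in L^\infty$ for the given $\rho$; this simultaneously makes $\mathcal{W}_N^C$ nonempty for that density and forces $\tau_{\psi_\mathrm{m}}/\rho\le g/\rho\in L^\infty$, so that the limit genuinely lies in $\mathcal{W}_N^C$. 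With that choice the four constraints of $\mathcal{W}_N^C$ are all verified for $\psi_\mathrm{m}$ and the proof is complete.
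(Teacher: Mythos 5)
Your argument is correct and follows the same direct-method skeleton as the paper's Appendix~B proof: an $H^1$-bounded minimizing sequence via $\tau_{\psi^n}\leq g$, Banach--Alaoglu plus Lemma~\ref{lemma:min} for $\rho_{\psi_\mathrm{m}}=\rho$ and the weak-$L^1$/a.e.\ convergence of the currents, dominated convergence with a majorant built from $g$ for the subtracted current term, and weak lower semicontinuity of $\psi\mapsto\langle\psi,H_0\psi\rangle$ for the rest. Two details differ, both to your credit. First, you dominate the full quadratic, $|\jarb-\jpara_{\psi^{n_k}}|^2/\rho\leq 2|\jarb|^2/\rho+4g\in L^1$, and apply dominated convergence once, whereas the paper expands the square and handles the cross term by testing weak $L^1$ convergence against $\jarb/\rho\in L^\infty$; your route only uses $|\jarb|^2/\rho\in L^1$ from $\mathcal R_N$, so the $L^\infty$ hypothesis becomes superfluous at this step. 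Second, and more substantively, you actually verify that the limit lies in the constraint set: the paper asserts $\ED(\psi_\mathrm{m},\jarb)\geq \FD^C(\rho,\jarb)$ ``by definition'', which tacitly presumes $\psi_\mathrm{m}\in\mathcal W_N^C$, while your weighted lower-semicontinuity argument cleanly yields $\tau_{\psi_\mathrm{m}}\leq g$ a.e., and you correctly observe that the remaining constraint $\tau_{\psi_\mathrm{m}}/\rho\in L^\infty$ is not preserved under weak limits without uniformity. Your repair---restricting to cutoffs with $g/\rho\in L^\infty$---closes this at the mild cost of a hypothesis on $g$ not present in the statement (equivalently one could build a uniform ratio bound into the definition of $\mathcal W_N^C$); since the paper's own proof silently needs the same membership and leaves both points unaddressed, your version is, if anything, the more complete one.
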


\begin{proof} 

Let $\psi^n$ be a minimizing sequence in $\mathcal W_N^C$, i.e., 
\begin{equation}
\mathcal E(\psi^n,\jarb) \to \FD^C(\rho,\jarb), \quad \rho_{\psi^n}= \rho,\quad \jarb= \jpara_{\psi^n} +  \mathbf a(\psi^n,\jarb) \rho.  
\end{equation}
This implies that
$\Vert \psi^n \Vert_{H^1}^2 = 1 + 2\int \tau_{\psi^n} \rmd\x \leq 1 +2C$
and thus by the Banach--Alaoglu theorem there exists a weakly convergent subsequence and an element $\psi_\mathrm{m}\in H^1$ such that $\psi^{n_k} \rightharpoonup \psi_\mathrm{m}$ (weakly in $H^1$). 
By Lemma~\ref{lemma:min}, the limit function $\psi_\mathrm{m}$ has the particle density $\rho$, and $\jpara_{\psi_\mathrm{m}}$ is the weak $L^1$ limit of the $\{ \psi^{n_k}\}$'s paramagnetic current densities as well as the pointwise limit (a.e.) like given in Eq.~\eqref{eq:lem-jp-limits}.
Then with
\begin{align}
    \mathbf a(\psi_\mathrm{m},\jarb) = \frac{\jarb- \jpara_{\psi_\mathrm{m}}}{\rho},
\end{align}
$\psi_\mathrm{m}$ can be taken as a candidate for a minimizer, where by definition $\mathcal E(\psi_\mathrm{m}, \jarb) \geq  F_\mathrm{D}^C(\rho,\jarb)$.
What remains to be verified is the reverse inequality.
To meet that end, set
\begin{align}
    \mathbf a^n := \mathbf a(\psi^n,\jarb) = \frac{\jarb - \jpara_{\psi^n}}{\rho} ,
\end{align}
which is an element of $L^\infty$ since $\jarb/\rho \in L^\infty$ and since for $\psi\in \mathcal W_N^C$
\begin{equation}
    \begin{split}
        \frac{\jpara_\psi}{\rho_\psi} \leq \left(\frac{\tau_\psi}{\rho_\psi} \right)^{1/2} \in L^\infty .
    \end{split}
\end{equation}
Then, pointwise a.e.\ and weakly in $L^1$ we have for a subsequence
\begin{align}
    \jarb = \lim_k (\jpara_{\psi^{n_k}} + \mathbf a^{n_k} \rho) = \jpara_{\psi_\mathrm{m}} +  ( \lim_k \mathbf a^{n_k}) \rho .
\end{align}
This gives that $\mathbf a^{n_k} \rho \rightharpoonup  \mathbf a(\psi_\mathrm{m},\jarb) \rho$ weakly in $L^1$ as well as pointwise a.e.\ (for a subsequence). Moreover, using again the fact that $\jarb/\rho \in L^\infty$ and Lemma~\ref{lemma:min}, we obtain
by dominated convergence since $ \tfrac{1}{2}\vert \jpara_{\psi^{n_k}} \vert^2 \rho^{-1} \leq  \tau_{\psi^{n_k}} \leq g \in L^1$ that
\begin{equation}\label{eq:step1}
\begin{split}
&\lim_k \int \rho \vert \mathbf a^{n_k} \vert^2 \rmd \x   \\
& = \int \frac{\vert \jarb \vert^2}{\rho} \rmd \x -2 \lim_k \int\frac{\jarb\cdot \jpara_{\psi^{n_k}} }{\rho} \rmd \x + \lim_k  \int  \frac{ \vert \jpara_{\psi^{n_k}} \vert^2 }{\rho} \rmd\x \\
& =  \int \frac{\vert \jarb- \jpara_{\psi_\mathrm{m}} \vert^2 }{\rho} \rmd\x= \int\rho \vert \mathbf a(\psi_\mathrm{m},\jarb) \vert^2 \rmd \x.
\end{split}
\end{equation}
Consequently, 
\begin{equation}
    \begin{split}
 \ED(\psi_\mathrm{m},\jarb) 
    &=\langle \psi_\mathrm{m}, H_0\psi_\mathrm{m}\rangle -   \int\tfrac 1 2 \rho \vert \mathbf a(\psi_\mathrm{m},\jarb) \vert^2\rmd \x \\
    &\leq \lim_k \left( \langle \psi^{n_k}, H_0 \psi^{n_k}\rangle   -  \int \tfrac 1 2 \rho \vert \mathbf a^{n_k} \vert^2   \rmd\x \right) \\
    &= F_\mathrm{D}^C(\rho,\jarb)
        \end{split}
\end{equation}
follows by Eq.~\eqref{eq:step1} above and lower semi-continuity of the quadratic form $\psi\mapsto \langle \psi,H_0 \psi \rangle$. 
\end{proof}

\bibliography{refs}

\end{document}